\renewcommand{\Re}{\mathop{\rm Re}\nolimits}
\newcommand{\vk}{\varkappa}
\newcommand{\vp}{\varphi}
\newcommand{\vo}{ v_{(0,2k_y)} }
\newcommand{\bvo}{ \bar v_{(0,2k_y)} }
\newcommand{\gi}{\rho}
\newcommand{\la}{\lambda}
\newcommand{\lla}{\gamma}
\newcommand{\bk}{{\mathbf k}}
\newcommand{\bj}{{\mathbf j}}
\newcommand{\bn}{{\mathbf n}}
\newcommand{\bx}{{\mathbf x}}
\newcommand{\be}{\begin{equation}}
\newcommand{\ee}{\end{equation}}
\newcommand{\bb}{\mbox{\boldmath$\beta$}}
\newcommand{\R}{{\mathbb R}}
\newcommand{\C}{{\mathbb C}}
\newcommand{\IP}{{\bf P}}
\newcommand{\Z}{{\mathbb Z}}
\newcommand{\E}{{\bf E}}
\newcommand{\cZ}{{\cal Z }}
\newcommand{\T}{{\mathbb T}}
\newcommand{\TT}{{\mathbb T}^2}
\newcommand{\N}{{\mathbb N}}
\newcommand{\PP}{{\bf P}}
\newcommand{\ai}{a}
\newcommand{\cA}{{\cal A}}
\newcommand{\cD}{{\cal D}} 
\newcommand{\cF}{{\cal F}}
\newcommand{\cH}{{\cal H}}
\newcommand{\sa}{\ai^{\sigma_1\beta}_{\bk_1}}
\newcommand{\saa}{\ai^{\sigma_2\beta}_{\bk_2}}
\newcommand{\va}{\ai^{\sigma_1}_{\bk_1}}
\newcommand{\vaa}{\ai^{\sigma_2}_{\bk_2}}
\newcommand{\cQ}{{\cal Q}}
\newcommand{\cR}{{\cal R}}
\newcommand{\strela}{\rightharpoonup}
\newcommand{\const}{\mathop{\rm const}\nolimits}
\newcommand{\diag}{\mathop{\rm diag}\nolimits}
\newcommand{\dist}{\mathop{\rm dist}\nolimits}
\newcommand{\sgn}{\mathop{\rm sgn}\nolimits}
\def\dbar{{\mathchar'26\mkern-12mu d}}
\def\12{\tfrac12}
\def\lan{\langle}
\def\ran{\rangle}
\def\eps{\varepsilon}
\theoremstyle{plain}
\newtheorem{theorem}{Theorem}[section]
\newtheorem{lemma}[theorem]{Lemma}
\newtheorem{proposition}[theorem]{Proposition}
\newtheorem{corollary}[theorem]{Corollary}
\theoremstyle{definition}
\theoremstyle{remark}
\numberwithin{equation}{section}
\begin{document}

\author{Sergei
Kuksin\footnote{CNRS and  I.M.J, Universit\'e Paris Diderot-Paris 7, Paris, 
 France, e-mail:
  kuksin@math.jussieu.fr },\addtocounter{footnote}{2} Alberto Maiocchi \footnote{Laboratoire de
Math\'ematiques,  Universit\'e de Cergy-Pontoise, 2  avenue Adolphe Chauvin,
Cergy-Pontoise, France.}}

\title{The limit of small Rossby numbers for randomly forced
  quasi-geostrophic equation on $\beta$-plane  } 
\date{}
\maketitle


\begin{abstract}
We consider the 2d quasigeostrophic equation  on the $\beta$-plane 
for the stream function $\psi$,  with dissipation and a random force:
$$
(*)\qquad
(-\Delta +K)\psi_t - \rho J(\psi, \Delta\psi) -\beta\psi_x=
\langle \text{random force}\rangle -\kappa\Delta^2\psi +\Delta\psi,
$$
where $\psi=\psi(t,x,y), \ x\in\R/2\pi L\Z, \ y\in \R/2\pi \Z$.  For typical values of
the horizontal period $L$ we prove that the law of the action-vector of a solution 
for $(*)$ (formed by the halves of the
squared norms of its  complex Fourier coefficients) converges, as $\beta\to\infty$, to the law of an  action-vector 
for solution of an auxiliary effective equation, and the stationary distribution of the action-vector for solutions of $(*)$
 converges  to that of the effective equation. 
 Moreover, this convergence is uniform in $\kappa\in(0,1]$. 
The effective equation is an infinite system of stochastic equations which splits
into invariant subsystems of complex dimension $\le3$; each of these subsystems  is an integrable hamiltonian  system, coupled with a
Langevin thermostat. Under the iterated limits
$\lim_{L=\rho\to\infty} \lim_{\beta\to\infty}$  and $\lim_{\kappa\to 0} \lim_{\beta\to\infty}$ we get similar systems. In
particular, none of the three limiting systems exhibits the energy cascade to high frequencies.
\end{abstract}


\section{Introduction}\label{s0}
\subsection
{ Equations}  \label{s0.1}
The quasi-geostrophic equation on the $\beta$-plane for the
stream-function $\psi(\bx)$ describes the horizontal motion of the
atmosphere and the ocean. In the case of atmosphere, $\psi$ is 
 defined  on a strip of the two-dimensional 
 sphere  around a
mid-latitude parallel, while in the case of oceans it is usually defined on a 
bounded domain of $\R^2$. We will 
consider the equation in the strip, under the   double-periodic boundary condition
$$
\bx=(x,y) \in \TT_{L,1}=  \R/(2\pi L\Z)\times S^1,\qquad S^1=\R/2\pi \Z\,,
$$
where $L$ is the ratio of the horizontal to the vertical
scale. Usually it is assumed to be large, $L\gg1$. On the contrary the
vertical scale (we normalised it to one) 
is relatively short, so the Coriolis force has approximately a linear dependence
on $y$, and the equation takes the form
\begin{equation}\label{*1}
\left(-\Delta+K\right) \psi_t(t,\bx)-\gi J(\psi,\Delta \psi)-\beta \psi_x=0,\quad
\bx=(x,y)\in\TT_{L,1}\,.
\end{equation}
Here $K\ge 0$ is the {\it Froude number}, $J(\psi,\Delta \psi)$ denotes the Jacobian
determinant of the vector $(\psi,\Delta \psi)$,\footnote{That is,
  $J(u(\bx), v(\bx))=u_xv_y-u_yv_x$.} $\beta$ is a constant parameter
controlling the gradient of the Coriolis force and $\rho$ is the scale
of the ``physical" stream  
function (accordingly,  we  intend to study solutions $\psi$ of
eq.~\eqref{*1}  of order one).  The  same equation describes the 
 drift waves in plasma and is usually called in the plasma physics the {\it
Hasegawa-Mima equation} (see \cite{Titi} for a recent study of a
 simplified version of this equation in 3d). With minimal changes, our
 approach applies to eq.~\eqref{*1} with the Dirichlet boundary conditions on the strip's boundary:
$$
(x,y)\in  Q:= \R/(2\pi L\Z)\times[0,\pi]\,,\qquad \psi(t,x,y)\mid_{[0,\infty)\times \partial Q}=0\,.
$$
In this case we should argue as below, decomposing   solutions $\psi$ not in
the exponential basis on $\T^2_{L,1}$, but in the basis, formed by the functions
$
e^{i L^{-1}k_x x} \sin (k_y y)$, $\,k_x\in\Z$, $k_y\in\N.
$

 In
order to take into account the kinematic viscosity of the atmosphere, its friction against the
surface (called the Ekman damping) as well as various external influences, the equation is
modified by dissipation and random force (see \cite{Ped}, Chap. 4, and \cite{BDW}):
\begin{equation}\label{1.11}
\begin{split}
\left(-\Delta+K\right) \psi_t(t,\mathbf x)-\gi J(\psi,&\Delta
\psi)-\beta \psi_x= 
\frac{d}{dt} \sum_{\bk\in \Z^2_*} d_{\bk} \bb^{\bk}(t) e^{i\bk_L\cdot \bx}\\&-
\kappa \Delta^2 \psi + \Delta \psi\ ,\qquad 
 \psi=\psi(t,\bx),\quad \bx\in \TT_{L,1}\,.
\end{split}
\end{equation}
Here the term $\Delta\psi$ represents the properly scaled Ekman damping  and $-\kappa\Delta^2\psi$,
$0<\kappa\le1$, 
is the kinematic viscosity; for the Earth atmosphere $\kappa$ is very small (but positive). 
 For $\bk=(k_x,k_y)\in \Z^2_*=\Z^2\setminus\{0\}$,
 $\bk_L$ denotes the vector \footnote{In the following, the notation
  $\bk_L$ will be often used alongside that which explicitly involves
  $L$ in order to abbreviate formulas.} 
$$\bk_L=(k_x/L,k_y).
$$
 The  numbers $d_\bk$ are real
 non-zero and even in $\bk$, i.e. 
 \be\label{nondeg}
 d_\bk = d_{-\bk}\ne0 \qquad \forall\,\bk\,.
 \ee
 They decay with $|\bk|$ in such a
 way that $B_2<\infty$, where for $r\ge0$ 
\be\label{Br}
B_r:=2\sum_{\bk\in \Z^2_*} |\bk_L|^{2r}
|b_\bk|^2  \le\infty\,.
\ee
  The processes $\bb^\bk(t), \bk\in\Z^2_*\,$, are 
standard   complex Wiener processes. That is  $\bb^\bk(t)=W^\bk_+(t)+iW^{\bk}_-(t)$, 
where $W_+^\bk$ and $W_-^\bk$
 are standard independent  real Wiener
processes. The process $\bb^\bk$ and $\bb^\bj$ are independent if $\bk\ne\pm\bj$,
with the  reality constraint  $  \bb^{-\bk}=\bar
\bb^{\bk}$ for all $\bk$. Abusing language we will say that the processes 
$\bb^\bk(t), \bk\in\Z^2_*$, are standard independent complex Wiener processes with the
reality constraint.   The random force which stirs eq.~\eqref{1.11} 
 is a non-degenerate and   sufficiently smooth in $\bx$ real-valued 
  random function.

 The space-mean  of solution 
 $\psi$ is a time-independent quantity which is assumed to vanish:
$$
\langle \psi\rangle(t)
 :=\int_{\TT_{L,1}} 
 \psi(t,\bx)\, d\bx\equiv 0\ .
$$

If we write $\psi(t,\bx)$ as a Fourier series, $\psi(t,\bx)=
\sum_{\bk\in\Z^2_*} v_\bk e^{i\bk_L\cdot \bx}$, where $v_\bk$  are complex numbers such
that 
\be\label{real}
v_{-\bk}= \bar v_\bk\ ,
\ee
then, denoting 
$$
\lla_\bk=(\kappa|\bk_L|^4+|\bk_L|^2)/(K+|\bk_L|^2)\ , \qquad b_\bk=
d_\bk/(K+|\bk_L|^2)\ ,
$$
 we rewrite  \eqref{1.11} as the system
\begin{equation}\label{1.100}
\begin{split}
\dot v_\bk-i \beta \frac{k_x}{L(K+|\bk_L|^2)} v_\bk=&
\frac \gi {L(K+|\bk_L|^2)}\sum_{\bj,\bn\in \Z^2_*}
|\bn_{L}|^2\left(\bj \times \bn \right) v_{\bj}v_{\bn} \delta^{\bj\bn}_\bk   \\
&-\lla_\bk v_\bk 
+b_\bk \dot\bb^\bk(t),\qquad \bk \in \Z^2_* \,.
\end{split}
\end{equation}
Here  $\bj\times \bn  = j_xn_y - j_yn_x$, and we use the standard in physics 
 notation (see \cite{Naz})
 \begin{equation}\label{N1}
\delta^{\bj\bn}_{\bk}=\left\{\begin{array}{cc}
1 & \mbox{if }\bj+\bn-\bk = 0\,, \\
0 & \mbox{otherwise}\,.
\end{array}
\right.\  \,.
\end{equation}

The {\it Rossby number} of a solution $\psi$ of \eqref{1.11} of order one is
 $
 \mathop{\rm Ro}:= \tfrac{\gi}{\beta L}\,,
 $
 and we are interested in the behaviour of solutions $\psi$ under the
 limit of small $\mathop{\rm Ro}$.
 This limit is  relevant  for meteorology and
   climatology,  see
   \cite{Ped}. Indeed, as it was 
    first pointed out by Rossby
      \cite{Rossby}, under this limit the characteristic features of
     large scale solutions of the 
      3d Navier Stokes equation (3d\,NSE) of a sheet of
     fluid on a rotating sphere can be well approximated, at least in mid
     latitudes, by solutions of \eqref{1.11}. Among  various linear
   modes of oscillations in the linearized 3d\,NSE for  rotating fluid,
   which are called  Poincar\'e, Kelvin and Rossby modes (see
   \cite{GalSR} for more details), only the Rossby modes survive as
   solutions    of the linearized equation  \eqref{1.11}.  They appear only 
   if the Coriolis acceleration is not constant (in our case it changes with  the latitude).
The Rossby waves  are responsible for
    most of the energy exchange in large scale motions of the atmosphere 
    and the oceans,
      so they are very important for the meteorology and oceanology. 
      
      Rigorous study of   small Rossby number
   solutions of the  3d\,NSE for rotating fluid in
   the deterministic setting was pioneered in \cite{Babin1, Babin2}
   (also see the monograph \cite{CDGG}).  There the case
   of constant Coriolis force is treated  and  no Rossby waves
   appear. For the case of varying 
   Coriolis force in the deterministic setting  see the work
   \cite{GalSR}, where a rotating shallow water model is considered
   and the asymptotic  for small Rossby number is studied  in the unbounded 
   domain $S^1\times \R$  (accordingly the dispersion
   relation for Rossby waves and  the resonance
   conditions are different from ours). The stochastic technique  we employ in our study are
   different 
   from the deterministic approaches in the above-mentioned works, so the results obtained
   differs significantly  as well.

\subsection{Effective equation and main results}\label{intro:eff}
Eq. \eqref{*1} is a hamiltonian PDE.\footnote{Rather a Poisson  one, but we neglect this subtlety.} Accordingly, 
eq.~\eqref{1.11} is a damped-driven hamiltonian PDE, written in the slow time. To see this, note that being
re-written  using the fast time $T=\beta t$  the equation becomes 
$$
\left(-\Delta+K\right) \psi_T(T,\mathbf x)-\gi \nu J(\psi,\Delta \psi)-  \psi_x= \sqrt\nu\,
\frac{d}{dT} \sum_{\bk\in \Z^2_*} d_{\bk} \tilde  \bb^{\bk}(T) e^{i\bk_L\cdot \bx}- \nu(
\kappa \Delta^2 \psi- \Delta \psi),
$$
where $\nu=\beta^{-1}$ and $\{  \tilde  \bb^{\bk}(T)\}$ is another set of standard independent complex Wiener 
processes with the reality constraint, cf. \cite{K10, K12, KM13}.  In the just mentioned publications it was 
suggested to control the main 
 statistical properties of  solutions for similar equations as 
$\nu \to 0$ by studying suitable \emph{effective equations}. Now we will derive an effective equation for
\eqref{1.11}, using the interaction representation, following our previous work \cite{KM13}. 

The  interaction representation is standard in physics to study small-amplitude solutions for nonlinear equations
(including the quasi-geostrophic equation   with small Rossby numbers, see \cite{Naz, CZ00} and \cite{Maj}). It consists in 
 passing from the complex variables $v_\bk(t)$
to the fast rotating variables
\begin{equation}\label{a}
a_\bk (t) =e^{i\beta\la^{L,K}_\bk t} v_\bk(t),\qquad
\bk\in\Z^2_*\ , 
\end{equation}
where
$$
\la^{L,K}_\bk=-\frac{k_x/L}{K+|\bk_L|^2}=-\frac{k_xL}{L^2K + k_x^2+L^2 k_y^2}
$$
is the linear frequency of rotation (corresponding to the Rossby waves) in
the linearised system \eqref{*1}$\!{}\mid_{\rho=0}$ with 
$\beta=1$. In view of \eqref{1.100}, the $a$-variables
satisfy the system of equations
\begin{equation}\label{1.100a}
\begin{split}
\dot a_\bk=& \frac \gi {L (K+|\bk_L|^2)}\sum_{\bj,\bn\in \Z^2_*} |\bn_L|^2\left(
\bj\times \bn \right) a_{\bj}a_{\bn} \delta^{\bj\bn}_\bk 
 \exp\left(-i\beta t(
  \la^{L,K}_{\bj}+\la^{L,K}_{\bn}-\la^{L,K}_{\bk})\right)\\
&-\lla_\bk a_\bk  + b_\bk  e^{i\beta\la^{L,K}_{\bk} t} 
\dot\bb^\bk(t)\ , \qquad
  \bk\in \Z^2_* . 
\end{split}
\end{equation}
The  terms,  constituting the nonlinearity,  oscillate 
fast  as $\beta$ goes to infinity, unless the sum of the eigenvalues in
the exponent vanishes. Jointly with the observation that the collection of processes 
$\{e^{i\beta\la^{L,K}_{\bk} t} \dot\bb^\bk(t) \}$ is another set of
standard independent white noises with the reality constraint, this leads
to a guess that only the terms 
for which this sum  equals  zero (i.e., the resonant terms) contribute
to the limiting dynamics, and that the effective equation is the
following damped/driven hamiltonian  system
\begin{equation}\label{*eff1}
\begin{split}
\dot v_\bk=  \frac \rho {L(K+|\bk_L|^2)}\sum_{\bj,\bn\in
  \Z^2_*} |\bn_{L}|^2\left( \bj\times \bn 
  \right) v_{\bj}v_{\bn} \delta^{\bj\bn}_\bk
\delta(\la^{L,K\,\bj\bn}_\bk) 
-\lla_\bk v_\bk +b_\bk \dot\bb^\bk(t)\,,
\end{split}
\end{equation}
where $\bk\in\Z^2_*\,$. Here we use another physical abbreviation:
\begin{equation}\label{N2}
\delta(\la^{L,K\,\bj\bn}_{\bk})=\left\{\begin{array}{cc}
1 & \mbox{if }\la^{L,K}_{\bj} +\la^{L,K}_{\bn}
- \la^{L,K}_{\bk}
 = 0, \\
0 & \mbox{otherwise.}
\end{array}
\right.\ .
\end{equation}

Effective equation \eqref{*eff1} takes  a particularly simple form for
 values of the external parameters $L$ and $K$ off a certain exceptional 
  negligible set $\cZ$ (for any fixed $K$ this set  contains at most a countable set of $L$'s).  Outside 
 $\cZ$ the equation is an infinite system of stochastic equations which splits
to invariant subsystems of complex dimension $\le3$; each of them is an integrable hamiltonian  system, coupled with a
Langevin thermostat.  Accordingly, the hamiltonian part of the effective equation is a direct sum
of low-dimensional 
 hamiltonian systems, which can be regarded as  nonlinear modes. All nontrivial trajectories of each 
subsystem are periodic.  The whole  hamiltonian  part of the  effective equation
 has the
physically remarkable property of conserving all Sobolev norms.
 The effective equation is well posed and 
 possesses a unique stationary measure $\mu_0$.

 A similar splitting of the  limiting equation
    to uncoupled finite dimensional
    subsystems occurs in the  averaging for the deterministic 
     3d\,NSE with constant
    Coriolis force  (see \cite{Babin1, Babin2,  CDGG}). In that case,
    which is closely related to the deterministic version of our study,
     the averaging is performed on the 
    Poincar\'e modes of oscillation,  and  Rossby modes are not
    present.  For the deterministic 
    quasi-geostrophic equation, the splitting of the formal (as $Ro\to 0$)
    limiting equation to uncoupled small subsystems was  observed in
      \cite{Kar93}, where the authors considered the equation on the
     two-sphere, and studied the structure and properties of the resonant
      relations, which define the terms of the effective equation
      \eqref{*eff1} (see also the book \cite{Kar94}).

The main result of the present paper consists in proving, following  
 \cite{K10, K12, KM13}, that for $(L,K)\notin\cZ$  the effective equation
\eqref{*eff1} describes main statistical properties of the {\it actions} $I(\psi^\beta(t))$ of
solutions $\psi^\beta$ with large $\beta$, where 
\be\label{actions}
I(\psi(t))= \{I(v_\bk(\psi(t))) , \bk\in \Z^2_*  \}\,,\qquad
I(v_\bk)=   \tfrac12 |v_\bk|^2 
\,.
\ee
Namely, in Theorem \ref{t5.22a} we show that  the distribution of actions for 
solutions of the Cauchy problem for the effective equation well approximate those for
solutions for the Cauchy problem for eq.~\eqref{1.11}=\eqref{1.100} with large $\beta$,
on time-intervals of order one. While in Theorem~\ref{t.stat} 
and Lemma~\ref{l_effeq} we prove that 
 the unique stationary measure $\mu_0$
for eq.~\eqref{*eff1} 
describes the statistics of actions for stationary solutions for eq.~\eqref{1.11} as  $\beta\to\infty$,
as well as the  limiting statistic of actions for any solution as
$t\to\infty$ and $\beta\to\infty$:
\medskip

\noindent
{\bf Theorem 1}.  Let $(L,K)\notin\cZ$. Then

\noindent 
 i) 
Equation \eqref{1.11} has a unique stationary measure $\mu^\beta$, and 
$$
I\circ \mu^\beta\strela I \circ \mu^0 \qquad \text{as} \quad \beta\to\infty\,,
$$
uniformly in $\kappa\in(0,1]$.

\noindent ii) 
Accordingly, for
 any solution $v^\beta(t)$ of \eqref{1.11} with  $\beta$-independent initial data 
$v^\beta(0)$ 
 we have\footnote{Our results do not imply that $\lim_{t\to\infty}(\cD(I(v^\beta(t)))$ is uniform in $\kappa$.}
$$
\lim_{\beta\to\infty} \lim_{t\to\infty} \cD(I(v^\beta(t)) = I\circ\mu^0\,.
$$

\noindent iii)
If $B_p<\infty$ for some $p$, then
$$
\int e^{\eps_p |v|_{h^p}^2} \, \mu^0(dv)\le C_p<\infty\,,\quad  |v|_{h^p}^2=\sum |v_\bk|^2 |\bk_L|^{2p}\,,
$$
for  suitable $\kappa$-independent constants 
 $\eps_p$ and $C_p$. 
\medskip

These asymptotical, as $t\to\infty$, results for solutions of equation \eqref{1.11} have no analogy
for  deterministic $\beta$-plane equations. 

Note that    assertion iii) of the
 theorem shows that  under the limit $\beta\to\infty$ 
 no cascade of energy occurs in eq.~\eqref{1.11}, and that this
 happens uniformly in $\kappa>0$. 

The specific form of the effective equation \eqref{*eff1} entails some
remarkable properties for solutions of eq.~\eqref{1.11}. In particular, 
the nonlinear periodic modes 
 of the hamiltonian part of the effective equation
 should be observable as approximate solutions of eq.~\eqref{1.11}  
with large $\beta$ and large energy (i.e., with large $\rho$). Accordingly these periodic  modes
should be  important for meteorology and  physics of plasma, where  equation \eqref{1.11} 
 appears.

 The effective equation still depends on the parameters $L, K$ and $\kappa$, but this dependence is regular.
 In Proposition~\ref{p_1}  we show that the iterated limits $\lim_{L=\rho\to\infty}\lim_{\beta\to\infty}$  and
 $\lim_{\kappa\to0}\lim_{\beta\to\infty}$ for solutions of \eqref{1.11} are straightforward and the limiting systems
 are similar to the effective equation. In particular, they also
 exhibit no cascade of energy.
 
It may seems that this contradicts to  the results of  physical
works, exploiting the weak turbulence approach to eq.~\eqref{*1},
where power law stationary spectra are found for this equation, which
 corresponds to the occurrence of cascades of energy and enstrophy, at
 least in some regions of the wavenumber space. See, for instance,
 \cite{Pit} and  \cite{BNaz} for a discussion of the  physical meaning
 of these spectra and their stability  under perturbations.  In
 fact, there is no contradiction since in
 the just mentioned works the stationary spectra are established under the
  limit when the whole size of the box (and not only its horizontal
  size $L$) tends to infinity.  Accordingly, the problem of
  existence of the energy cascades for the $\beta$-plane equation  is left
  open.

Our treatment of equation \eqref{1.11} follows the approach to the
damped-driven  NLS equation on the torus $\T^d_L = \R^d/2\pi L\Z^d$,
developed in our work \cite{KM13,KM13KZ}. There in paper  \cite{KM13}
we derive an effective equation which describes the dynamics under a
limit, similar to the limit $\beta\to\infty$ of this work. That
equation turns out to be significantly more complicated than the
effective equation \eqref{a}. In \cite{KM13KZ}, using the heuristic
tools, borrowed from the theory of wave turbulence, we show that under
the iterated limit $L\to\infty$ it exhibits the power law spectra,
predicted by V.~Zakharov and others (in difference with the result of
Proposition~\ref{p_1} of this work). 

\medskip

\noindent
{\bf Notation and Agreement.} The  {\it stochastic terminology} we use
agrees with \cite{KaSh}. 
 All filtered probability spaces we work with satisfy the {\it usual
   condition} (see \cite{KaSh}).\footnote{ 
I.e., the corresponding filtrations $\{\cF_t\}$ are continuous from
the right, and each $\cF_t$ contains all negligible sets.}
 Sometimes we forget to mention that  
a certain relation holds a.s. \\
{\it Vectors.}  $\Z^2_*$ 
stands for the space of nonzero integer 2d vectors. For $\bk=(k_x,k_y)\in
\Z^2_*$, $\bk_L$  denotes the vector $\bk_L=(k_x/L,k_y)$.
 For an infinite vector
$\xi=(\xi_\bk,\bk\in\Z^2_*)$ (integer, real or complex) and $N\in\N$  
we denote by $\xi^N$, depending on the  context,  either 
the finite-dimensional vector $(\xi_\bk, |\bk_L|\le N)$, or the
infinite-dimensional  vector, 
obtained by replacing the components $\xi_\bk$ with $|\bk_L|>N$ by
zero.

\noindent
{\it Scalar products.}  The notation  ``$\cdot$'' stands for  
the Euclidean scalar product in $\R^d$ and in $\C$. The latter means that if $u,v\in\C$, then 
$u\cdot v=\Re(\bar u v)$.  The $L_2$-product is denoted $\lan \cdot, \cdot\ran$, and we also denote by
$\lan f,\mu\ran = \lan\mu,f\ran$ the integral of a function $f$ against a measure $\mu$. 

\noindent
{\it Max/Min.} For real numbers $a$ and $b$ we denote $a\vee b=\max(a,b)$, $a\wedge b=\min(a,b)$.
\medskip          

\noindent{\it Acknowledgments.} We thank for discussion V.~Zeitlin. 
This work was supported by  l'Agence
Nationale de la Recherche through the grant STOSYMAP (ANR 2011BS0101501).

\section{Preliminaries}\label{s1}
\subsection{Apriori  estimates.}\label{s1.1}
In this section we discuss preliminary routine 
 properties of solutions for eq.  \eqref{1.11}; most of them are well known
 (e.g., see \cite{BDW}  and  cf.  \cite{KS}). 
By $C^*, C^*_1$, etc we denote various constants of the form
$$
C^* = C(L,K, B_2,\|\psi_0\|_2) 
$$
which   occur in  estimates. 

 It is convenient to rewrite 
eq.~\eqref{1.11} as 
\begin{equation}\label{1.1}
\begin{split}
\psi_t=& \beta (K-\Delta)^{-1}\psi_x +\gi(K-\Delta)^{-1} J(\psi,\Delta\psi)\\  
& +\frac{d}{dt}\sum_{\bk\in \Z^2_*}
b_\bk\bb^\bk(t)e^{i\bk_L\cdot \bx} -
\left(K-\Delta\right)^{-1}\left(\kappa \Delta^2 \psi - \Delta\psi\right)\ ,
\end{split}
\end{equation}
where 
$\psi=\psi(t,\bx)$ and $\langle \psi(t) \rangle \equiv0$. The  numbers  $b_\bk=d_\bk/(K+|\bk_L|^2)$ are real
and even in $\bk$. The processes $\bb^\bk$
satisfy the reality constraint, and we assume that  $B_2<\infty$, see \eqref{Br}. 

By $\cH^p$, $p\in \R$, we denote the Sobolev space of functions with
zero mean,  $\cH^p=\{\psi\in H^p(\TT_{L,1}, \R), \langle \psi \rangle =0\}$,
 and denote by $\lan \cdot,\cdot \ran$ the normalised 
  $L^2$--scalar product on $\TT_{L,1}$,
  $$
  \langle u, v\rangle = (4\pi^2 L)^{-1}\int_{\T^2_{L,1}}u(x)\bar v(x)\,dx.
  $$
  We provide $\cH^p$  with
 the  homogeneous norm $\|\cdot \|_p$,  
$$
\left\| \psi\right\|_p^2=\langle(-\Delta)^p\psi, \psi\rangle= 
\sum_{\bk\in \Z^2_*}|v_\bk|^2  |\bk_L|^{2p}\quad
\mbox{for } \psi(\bx)=\sum_{\bk\in \Z^2_*} v_\bk e^{i\bk_L \cdot \bx}\  .
$$
 In the space of
complex sequences we introduce the norms
\begin{equation}\label{vnorm}
\left|v\right|^2_{h^m}=\sum_{\bk\in \Z^2_*} \left|v_\bk\right|^2 \left|\bk_L\right|^{2m}\ , 
\quad m\in\R\ , 
\end{equation}
and set $h^m=\{v|\, \left| v\right|_{h^m} <\infty\}$. Then the Fourier transform 
$$
  \psi(\bx)\mapsto v=\left\{v_\bk\right\}_{\bk\in \Z^2_*} \in \C^\infty
$$
defines isometries of the spaces $\cH^m$ and $h^m$, $m\in\R$. 

Let $\psi(t,x)$ be a solution of \eqref{1.1} such that
$\psi(0,x)=\psi_0$. It satisfies standard a-priori estimates which we now
discuss. Firstly, we fix any positive
$\eps_0$ such that $\eps_0(K^2B_0+2KB_1+B_2)\le1/2$, 
apply the Ito formula to $f(\psi)=e^{\eps_0(\|\psi\|_2^2+K\|\psi\|_1^2)}$ and get that 
\begin{equation*}
\begin{split}
d f&=  \eps_0 f \left(2
\langle \Delta \psi,( -\beta \psi_x- \gi  J(\psi,\Delta \psi) +\kappa 
\Delta^2 \psi- \Delta \psi)\rangle\right)dt +dM(t) \\
& \quad +f\left( 2\eps_0\sum_{\bk}
|\bk_L|^2(K+|\bk_L|^2)b_\bk^2+4\eps_0^2 \sum_\bk b_\bk^2
|\bk_L|^4\left( K+|\bk_L|^2\right)^2
|v_\bk|^2 \right) dt  \\ 
&= \eps_0f \left( -2\kappa
\left\|\psi\right\|_3^2 -2
\left\|\psi\right\|_2^2\right)dt + dM(t)\\
  & \quad + \eps_0f\left( (KB_1+B_2)+ 2\eps_0 \sum_\bk b_\bk^2
|\bk_L|^4\left( K+|\bk_L|^2\right)^2
|v_\bk|^2 \right) dt \,,
\end{split}
\end{equation*}
where $M(t)$ is a stochastic integral. Since 
$\eps_0b_\bk^2 (K+|\bk_L|^2)^2 \le \eps_0(K^2 B_0+2KB_1+B_2)/2\le 1/4$ for each $\bk$, then 
 taking the expectation  and integrating, we obtain
\begin{equation*}
\begin{split}
\E f(\psi(T))-  f(\psi_0)&\le  \eps_0 \E\int_0^T  f(\psi(t))  \big(
-\left\|\psi(t)\right\|_2^2  + 2(KB_1+B_2)\big) dt\\
&\le \frac{ \eps_0}2 \int_0^T \left( -\E f(\psi(t)) +C^{*'}\right)\,dt\,.
\end{split}
\end{equation*}
  So the Gronwall lemma implies  that, 
uniformly in $\beta>0, \rho>0$ and $\kappa\in(0,1]$ we have 
\begin{equation}\label{1.3}
\E   e^{\eps_0 \|\psi(t)\|^2_2} \leq C^*
\quad
\forall t\geq 0\ .
\end{equation}
\smallskip

Now  let us apply the Ito formula to $g(\psi)=\|\psi\|_2^2+K \|\psi\|_1^2$. We get that 
\begin{equation}\label{000}
\begin{split}
&g(\psi(t)) -g(\psi_0)+2 \int_0^t\big( \kappa\|\psi\|_3^2 + \|\psi\|_2^2\big)ds = 2t\sum_\bk |\bk_L|^2(|\bk_L|^2+K)b_\bk^2+M'(t)\,,\\
& M'(t) =2\int_0^t\langle-\Delta\psi, \sum_\bk d_\bk e^{i\bk_L\cdot \bx} d\bb^\bk(s)\rangle
=2\int_0^t \sum_\bk \psi_{\bk}|\bk_L|^2d_\bk\,d\beta^\bk(s)\,.
\end{split}
\end{equation}
By the Burkholder-Davis-Gundy inequality 
(see \cite{DZ}) for $p\ge 1$ we have 
\begin{equation*}
\begin{split}
Z^p_T:&= \E \sup_{0\le t\le T}|M'(t)|^p \le
C_p\E\left (\int_0^T \big( \sum|\psi_\bk|^2 |\bk_L|^4 d_\bk^2\big)ds\right)^{p/2}\,.
\end{split}
\end{equation*}
On account of  the H\"older inequality the 
 r.h.s. 
  is smaller than
\begin{equation*}
\begin{split}
C_p T^{(p/2-1)\vee0} \int_0^T\E \left( \sum|\psi_\bk|^2 |\bk_L|^4
d_\bk^2\right)^{p/2}  ds\,.
\end{split}
\end{equation*}
Since $d_\bk^2 = b_\bk^2(K+|\bk_L|^2)\le \tfrac12(B_1+KB_0)$, then in view of \eqref{1.3}
$$
Z^p_T \le T^{(p/2)\vee1} C^*_1\qquad \forall p\ge 1\ .
$$
Therefore
\be\label{u1}
\E\sup_{0\le t\le T}\|\psi(t)\|_2^{2p} \le (T^p+1)\,C_p^*\qquad
\forall\, p\ge 1\ .
\ee

Equation \eqref{1.11}, written as \eqref{1.100}, is similar to the stochastic 2d~Navier-Stokes equation, written for the stream 
function.
In view of the obtained a priori estimates, 
 the techniques usually employed to study the latter (see in \cite{KS})
  imply   the well-posedness of the initial value problem for eq.~\eqref{1.1},
 as well as the existence and uniqueness 
 of the stationary measure. We recall that we always assume \eqref{nondeg} and 
  that $B_2<\infty$. 
 
 \begin{theorem}\label{theorem}
 For any $\psi_0\in\cH^2$, eq. \eqref{1.11}, supplemented with the initial condition $\psi(0)=\psi_0$,
 has a unique strong solution  in $\cH^2$, satisfying \eqref{u1}.  Moreover, this equation 
 has a unique stationary measure, and the corresponding stationary solution $\psi$ also satisfies estimates \eqref{1.3}-\eqref{u1},
 where the constants $C^*,C^*_1$ depend on $L, K$ and $B_2$. 
 \end{theorem}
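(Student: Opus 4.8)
Eq. \eqref{1.11}=\eqref{1.100} is, structurally, a stochastic 2d Navier--Stokes equation in the vorticity/stream-function variables, modified by the factor $K$, by the hyperviscosity $-\kappa(K-\Delta)^{-1}\Delta^2$ --- which for $\kappa>0$ is parabolic of order $2$, with Fourier multiplier $-\lla_\bk\sim-\kappa|\bk_L|^2$ --- and by the extra linear term $\beta(K-\Delta)^{-1}\psi_x$. Two algebraic facts will be used throughout: (a) the nonlinearity is transport-type, so $\langle J(\psi,\Delta\psi),\Delta\psi\rangle\equiv0$ --- the cancellation already exploited to derive \eqref{1.3}--\eqref{u1} --- and it survives the Galerkin truncation $P_N$ onto $\{|\bk_L|\le N\}$, since $P_N$ commutes with $(K-\Delta)^{\pm1}$ and fixes $-\Delta\psi^N$; (b) the operator $\beta(K-\Delta)^{-1}\partial_x$ is skew-symmetric ($\partial_x$ is skew and commutes with the Fourier multipliers), hence it drops out of all the energy identities. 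The plan is the standard scheme of \cite{KS}: Galerkin approximations $\to$ uniform a-priori bounds $\to$ a martingale solution $\to$ pathwise uniqueness $\to$ a unique strong solution and the Feller property $\to$ Krylov--Bogolyubov for a stationary measure $\to$ a coupling argument for its uniqueness.

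For existence: the truncated system is an SDE with polynomial coefficients, and by (a),(b) the computations leading to \eqref{1.3} and \eqref{u1} go through for $\psi^N$, with constants independent of $N$ (and of $\beta>0$, $\gi>0$, $\kappa\in(0,1]$); in particular $\psi^N$ is global and bounded in $L^p(\Omega;C([0,T];\cH^2))$. The Ito formula for $\|\cdot\|_2^2+K\|\cdot\|_1^2$, cf. \eqref{000}, also yields $\kappa\,\E\int_0^T\|\psi^N\|_3^2\,dt\le C(1+T)$ uniformly in $N$. Adding a routine fractional-in-time bound read off the equation, one gets tightness of $(\psi^N)_N$ in $C([0,T];\cH^{2-\e})\cap L^2(0,T;\cH^{3-\e})$; Prokhorov together with the Skorokhod representation produces a.s.-convergent copies, and one passes to the limit in the weak form of the equation --- the $L^2(0,T;\cH^3)$ bound supplying enough strong convergence to handle the quadratic term $J(\psi^N,\Delta\psi^N)$ --- thus obtaining a martingale solution, which inherits \eqref{1.3}--\eqref{u1}.

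For pathwise uniqueness in $\cH^2$, take two solutions and estimate their difference $w$ in $\cH^1$: in $\tfrac{d}{dt}(\|w\|_1^2+K\|w\|_0^2)$ the transport term and one of the two bilinear contributions vanish by the cancellations of the first paragraph, and the remaining one is absorbed, exactly as for the 2d NSE, by the parabolic dissipation $\kappa\|w\|_2^2$ together with a Gronwall term whose coefficient is a.s. integrable on $[0,T]$ by the bounds just obtained; hence $w\equiv0$. By Yamada--Watanabe this gives a unique strong solution, defining a Markov process on $\cH^2$; applying the same difference estimate to two initial data, on the event $\{\sup_{[0,T]}(\|\psi^{(1)}\|_2\vee\|\psi^{(2)}\|_2)\le R\}$, gives continuity in probability of $\psi_0\mapsto\psi(t;\psi_0)$, hence the Feller property. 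Existence of a stationary measure $\mu^\beta$ then follows from Krylov--Bogolyubov: by \eqref{1.3} and the $\cH^3$-bound the time-averages $\tfrac1T\int_0^T P_t^*\delta_{\psi_0}\,dt$ have $\int e^{\eps_0\|\cdot\|_2^2}$ and $\int\|\cdot\|_3^2$ bounded uniformly for $T\ge1$, so the family is tight on $\cH^2$ (since $\cH^3\hookrightarrow\hookrightarrow\cH^2$) and a weak limit point is stationary by the Feller property; in the stationary regime, averaging the Gronwall inequality preceding \eqref{1.3} forces $\langle e^{\eps_0\|\cdot\|_2^2},\mu^\beta\rangle\le C^*(L,K,B_2)$, and \eqref{u1} is obtained similarly.

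The genuinely nontrivial step --- the one I expect to be the main obstacle --- is uniqueness of $\mu^\beta$, and it rests on the non-degeneracy \eqref{nondeg}: the force excites every mode, $d_\bk\ne0$ for all $\bk$. The equation is dissipative (Ekman damping plus hyperviscosity), its nonlinearity is quadratic and transport-type with the cancellations above, it possesses a finite set of determining modes, and the noise is non-degenerate on the low ones; these are precisely the structural hypotheses under which the coupling construction of \cite{KS} (or a strong-Feller-plus-irreducibility argument) produces a coupling of two solutions contracting exponentially in probability, hence exponential mixing and, a fortiori, uniqueness of $\mu^\beta$. The real work is to check that these hypotheses literally apply to \eqref{1.11}: that $\lla_\bk\sim\kappa|\bk_L|^2$ plays the role of the Stokes operator in the dissipative estimates, that $\beta(K-\Delta)^{-1}\psi_x$ is harmless by the skew-symmetry noted above, and that the constants can be taken to depend only on $L,K,B_2$ as asserted (no uniformity in $\kappa$ is claimed here, nor needed). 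Granting this, this last part of the proof is a transcription of the corresponding argument in \cite{KS}.
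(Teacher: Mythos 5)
Your proposal is correct and follows essentially the same route as the paper, which itself only sketches this result: after deriving the a priori estimates \eqref{1.3} and \eqref{u1}, the authors observe that \eqref{1.11} written as \eqref{1.100} is structurally a stochastic 2d Navier--Stokes equation and defer to the standard machinery of \cite{KS} (Galerkin/compactness for existence, pathwise uniqueness, Krylov--Bogolyubov, and coupling under the non-degeneracy \eqref{nondeg} for uniqueness of the stationary measure). Your outline is exactly that machinery, with the relevant cancellations and the role of the skew-symmetric $\beta$-term correctly identified.
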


We  will need an estimate for the mapping $J:\psi\mapsto J(\psi,\Delta\psi)$.
\begin{lemma}\label{l.P^1}
For any $q> 3$, $J$ defines a bounded quadratic mapping from $\cH^1$ to $\cH^{-q}$.
\end{lemma}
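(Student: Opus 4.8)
\emph{Proof idea.} The plan is to exploit the divergence structure of the Jacobian nonlinearity, which lets one move all derivatives off $\Delta\psi$ and onto a test function. First I would write $\nabla^\perp\psi=(-\psi_y,\psi_x)$, so that $\nabla\cdot\nabla^\perp\psi=0$ and $J(\psi,\Delta\psi)=\nabla^\perp\psi\cdot\nabla(\Delta\psi)=\nabla\cdot\big(\nabla^\perp\psi\,\Delta\psi\big)$, and then check, for smooth $\psi$, the identity
\[
J(\psi,\Delta\psi)=\sum_{i,j=1}^{2}\partial_i\partial_j Q_{ij}(\psi),\qquad Q_{ij}(\psi):=(\nabla^\perp\psi)_i\,\partial_j\psi ,
\]
a double divergence of a quadratic expression in $\nabla\psi$. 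This follows from $(\nabla^\perp\psi)_i\,\partial_j\partial_j\psi=\partial_j\big((\nabla^\perp\psi)_i\,\partial_j\psi\big)-\partial_j(\nabla^\perp\psi)_i\,\partial_j\psi$ upon noting that, after summation over $j$, the last term equals $(\nabla^\perp(\tfrac12|\nabla\psi|^2))_i$, so applying $\sum_i\partial_i$ annihilates it because $\nabla\cdot\nabla^\perp(\,\cdot\,)=0$; the sole structural input is these two cancellations. Since each $Q_{ij}(\psi)$ is quadratic in $\nabla\psi$, it lies in $L^1(\TT_{L,1})$ with $\|Q_{ij}(\psi)\|_{L^1}\le C\,\|\psi\|_1^2$; in particular the right-hand side above makes sense in $\cH^{-q}$ for every $\psi\in\cH^1$ (it is $\partial_i\partial_j$ of an $L^1$ function), and one may take it as the definition of $J(\psi,\Delta\psi)$ on $\cH^1$, consistent with the classical expression on trigonometric polynomials.

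Then I would obtain the bound by duality: for $\phi\in\cH^q$, integrating by parts twice (no boundary terms, by periodicity),
\[
\big|\langle J(\psi,\Delta\psi),\phi\rangle\big|=\Big|\sum_{i,j}\langle Q_{ij}(\psi),\partial_i\partial_j\phi\rangle\Big|\le\Big(\sum_{i,j}\|Q_{ij}(\psi)\|_{L^1}\Big)\,\|\phi\|_{C^2}\le C\,\|\psi\|_1^2\,\|\phi\|_{C^2}.
\]
In dimension two, $H^q(\TT_{L,1})\hookrightarrow C^2(\TT_{L,1})$ for $q>3$, so $\|\phi\|_{C^2}\le C_q\|\phi\|_q$ and therefore $\|J(\psi,\Delta\psi)\|_{-q}\le C_q\|\psi\|_1^2$. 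Finally, to say that $J$ is a bounded \emph{quadratic} map I would check that its polarization $B(\psi_1,\psi_2)=\tfrac12\big(J(\psi_1,\Delta\psi_2)+J(\psi_2,\Delta\psi_1)\big)$ sends $\cH^1\times\cH^1$ boundedly to $\cH^{-q}$; this comes from the same computation with the symmetrized kernel $(\nabla^\perp\psi_1)_i\,\partial_j\psi_2+(\nabla^\perp\psi_2)_i\,\partial_j\psi_1$ — the $\psi_1\leftrightarrow\psi_2$ symmetry is exactly what makes the leftover term a perfect $\nabla^\perp$ of $\nabla\psi_1\cdot\nabla\psi_2$ — or, more cheaply, by polarization and bilinear scaling from the diagonal estimate.

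The step I expect to be the real point — the reason the lemma is not wholly trivial — is the passage to the double-divergence form: on $\cH^1$ one cannot integrate $\langle J(\psi,\Delta\psi),\phi\rangle$ by parts only once, since the product $\Delta\psi\cdot\nabla\psi$ pairs $\cH^{-1}$ against $L^2$ and is not a priori a distribution of bounded order. The rewriting above avoids this and is available only because of the algebraic structure of $J(\psi,\Delta\psi)$ — incompressibility of $\nabla^\perp\psi$ together with $\nabla\cdot\nabla^\perp=0$. Once that identity is in hand the rest is Hölder's inequality plus a single Sobolev embedding, and the threshold $q>3$ is dictated precisely by the requirement $\partial^2\phi\in L^\infty(\TT_{L,1})$ for $\phi\in\cH^q$.
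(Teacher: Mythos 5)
Your proof is correct and is essentially the paper's argument in different notation: the paper writes $J(\psi,\Delta\psi)=\nabla\times B(u)$ with $u=\nabla^\perp\psi$ and bounds the Navier--Stokes nonlinearity $B$ from $L^2$ to $\cH^{-a}$, $a>2$, by the same duality/incompressibility trick $\langle B(u),v\rangle=-\langle(u\cdot\nabla)v,u\rangle$ that you implement directly through your double-divergence identity. In both versions the content is identical --- a kernel quadratic in $\nabla\psi$, hence in $L^1$ with norm $\le C\|\psi\|_1^2$, is paired against two derivatives of the test function controlled in $L^\infty$ by the two-dimensional Sobolev embedding, which is exactly what forces $q>3$.
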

\begin{proof}
By the  definition of  stream function $\psi$,  the velocity
$u(\bx)$ equals   $u=\nabla^\perp \psi$, where $\nabla^\perp
g(\bx)=(-g_y,g_x)$. Besides, the vorticity $\omega=\nabla\times u$ equals
$\Delta \psi$, so that 
$$
J(\psi)=\left(u\cdot\nabla\right)\omega\ .
$$

Let $B(u)$ be the nonlinearity of the 2-d Navier-Stokes equation, i.e.,
$B(u)=(u\cdot \nabla) u$. Then, as is well known,
$\ 
\left(u\cdot\nabla\right)\omega= \nabla \times B(u)\ .
$
On the other hand, $B$ defines a bounded quadratic map from $L^2$ to
$\cH^{-a}$, for any $a>2$. Indeed, since $\langle B(u),v\rangle =-\langle
(u\cdot \nabla) v,u\rangle$  if $v(\bx)$ is a smooth vector
field, then
$$
|\langle B(u),v\rangle| \le C \left|u\right|^2_{L^2} \left|\nabla
v\right|_{L_\infty}\le C_1 
\left|u\right|^2_{L^2} \left\|v\right\|_a\ ,
$$
for $a>2$; so $\|B(u)\|_{-a}\le C_a \left|u\right|^2_{L^2}$.
Accordingly, $J$ is the composition of the following mappings:\\
\noindent 1)
 $\psi\mapsto u=\nabla^\perp \psi$;\\
2) $u\mapsto w=B(u)$;\\
3) $w \mapsto \nabla\times w$.\\
\noindent
The first map sends $\cH^1$ to $L^2$, the second sends $L^2$ to $\cH^{-a}$,
$a>2$, the third  sends $ \cH^{-a}$ to $\cH^{-a-1}$. This concludes the proof.
\end{proof}

\subsection{Study of the three-waves  resonances}\label{sez:ris}
Here we  study  the three-waves 
 resonances for the linearized system \eqref{*1}$\!{}\mid_{\rho=0}$:
 \begin{equation}\label{eq:resonance}
\la^{L,K}_{\bj} +\la^{L,K}_{\bn} = \la^{L,K}_{\bk}\ ,\quad \mbox{where }
\bk=\bj+\bn\ .
\end{equation}
We  say that the linearised system is \emph{strongly resonant} for a given  
value of the period $L$ and of the Froude number $K$ (and also say
that the pair  $(L,K)$ is strongly resonant) if \eqref{eq:resonance} 
has a solution such that all three frequencies 
$\la^{L,K}_{\bj}$, $\la^{L,K}_{\bn}$ and  $\la^{L,K}_{\bk}$ do not
vanish;  that is the 
  numbers $j_x$, $n_x$ and $k_x$ all are non-zero. We denote by $\cZ$
  the set, formed by all strongly resonant pairs $(L,K)$, and have the following

\begin{lemma}
For any  fixed $K\ge0$ the set of all periods  $\{L\in \R^+: (L,K)\in\cZ\}$  is at most countable.
\end{lemma}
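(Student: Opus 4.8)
The plan is to fix $K$, note that a resonant period must come from \emph{some} triple of wavevectors, show that each fixed triple contributes only finitely many resonant periods, and then take the countable union over all admissible triples. Write $\la^{L,K}_\bk=-k_xL\big/(A_\bk L^2+k_x^2)$ with $A_\bk:=K+k_y^2>0$. Fix $\bj,\bn\in\Z^2_*$ with $\bk:=\bj+\bn$ and $j_x,n_x,k_x$ all nonzero (note $\bk\in\Z^2_*$ automatically once $k_x\neq0$), and regard
$$
F(L)=\la^{L,K}_\bj+\la^{L,K}_\bn-\la^{L,K}_\bk
$$
as a rational function of $L\in\C$. Clearing denominators,
$F(L)=-L\,P(L^2)\big/\big((A_\bj L^2+j_x^2)(A_\bn L^2+n_x^2)(A_\bk L^2+k_x^2)\big)$
for a polynomial $P$ of degree $\le2$, and the periods $L$ for which this triple produces a resonance are precisely the positive real zeros of $F$, i.e. the positive roots of $L\mapsto P(L^2)$. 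So it suffices to show that $F$ is not the zero rational function, equivalently $P\not\equiv0$: then $P$ has at most two roots, the triple contributes at most two values of $L>0$, and since there are only countably many pairs $(\bj,\bn)\in\Z^2_*\times\Z^2_*$ with $j_x,n_x,(j_x+n_x)$ all nonzero, the set $\{L\in\R^+:(L,K)\in\cZ\}$ is a countable union of finite sets.

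The heart of the matter is to show $F\not\equiv0$, which I would do by exhibiting a pole of $F$ in $\C$. The only candidate poles sit at $L^2=-j_x^2/A_\bj$, $L^2=-n_x^2/A_\bn$, $L^2=-k_x^2/A_\bk$, and the residue of the summand $\la^{L,K}_\bk$ at $L=i|k_x|/\sqrt{A_\bk}$ equals $-k_x/(2A_\bk)\neq0$ (and analogously for $\bj$, $\bn$). If the three numbers $j_x^2/A_\bj$, $n_x^2/A_\bn$, $k_x^2/A_\bk$ are not all equal, one of them differs from the other two, and at the corresponding purely imaginary point exactly one of the three summands of $F$ is singular, so $F$ has a genuine simple pole there. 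If all three are equal to a common value $c>0$, then at $L_0=i\sqrt c$ all three summands blow up and the residue of $F$ works out to $-\tfrac c2\big(\tfrac1{j_x}+\tfrac1{n_x}-\tfrac1{k_x}\big)=-\tfrac c2\,\tfrac{k_x^2-j_xn_x}{j_xn_xk_x}$; using $k_x=j_x+n_x$ the numerator equals $j_x^2+j_xn_x+n_x^2$, a positive-definite quadratic form that cannot vanish since $(j_x,n_x)\neq(0,0)$. In both cases the residue of $F$ at that point is nonzero, so $F$ has a pole and hence $F\not\equiv0$.

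The only genuinely non-routine step is this last residue computation in the fully degenerate case where the three "pole magnitudes" coincide; everything else — the reduction to a degree-$\le2$ polynomial, the residue formulas, and the counting of triples — is bookkeeping. (One could equivalently avoid complex analysis and argue directly that some coefficient of the explicit $P(t)$ is nonzero: e.g. evaluating $P$ at $t=-j_x^2/A_\bj$ handles the non-degenerate cases, and the leading coefficient of $P$ turns out to be $j_xn_xk_x(j_x^2+j_xn_x+n_x^2)/c^2$ in the degenerate case; but the pole argument is cleaner.)
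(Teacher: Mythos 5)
Your overall skeleton is exactly the paper's: reduce to a countable union over triples $(\bj,\bn,\bk=\bj+\bn)$ with $j_x,n_x,k_x\neq0$, clear denominators to get a polynomial of degree $\le 2$ in $L^2$, show it is not identically zero, and conclude that each triple contributes finitely many periods. Where you diverge is in the non-triviality step: you exhibit a pole of the rational function $F(L)$ via residues, whereas the paper simply computes the \emph{constant} coefficient of the quadratic in $L^2$ and observes it equals (up to a factor) $j_xn_xk_x(n_xk_x+j_xk_x-j_xn_x)=j_xn_xk_x(k_x^2+n_x^2-k_xn_x)\neq0$. Your residue computation in the fully degenerate case is correct (the quantity $j_x^2+j_xn_x+n_x^2$ you obtain is the same positive-definite form as the paper's $k_x^2+n_x^2-k_xn_x$).

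There is, however, a genuine gap at $K=0$, which the statement explicitly allows. You assert $A_\bk=K+k_y^2>0$, but this fails when $K=0$ and $k_y=0$, and then the corresponding summand $\la^{L,0}_\bk=-L/k_x$ has no pole at all. Your case analysis (``not all three pole magnitudes equal'' versus ``all equal'') silently presumes each of the three summands contributes a pole, so it does not cover configurations with some $A=0$. In the worst subcase $K=0$, $j_y=n_y=0$ (hence $k_y=0$), the function $F(L)=-L\bigl(\tfrac1{j_x}+\tfrac1{n_x}-\tfrac1{k_x}\bigr)$ has \emph{no} poles anywhere, so no residue argument can show $F\not\equiv0$; one must instead check the linear coefficient directly (it is nonzero by the same quadratic-form identity). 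The other mixed subcases can be patched by examining only the terms that do have poles, but as written the proof does not do this. The paper's constant-coefficient computation avoids the issue entirely, since $P(0)=j_xn_x^2k_x^2+n_xj_x^2k_x^2-k_xj_x^2n_x^2$ does not involve the $A$'s; your own closing parenthetical, pushed to the constant coefficient rather than the leading one, would recover exactly that uniform argument.
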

\begin{proof} Relation \eqref{eq:resonance}  is equivalent to equating to zero a quadratic polynomial of
   $L^2$.
   The polynomial's  coefficients  are   functions of $\bj$   and $\bn$, 
   and if      $j_x\neq 0$, $n_x\neq 0$ and $k_x\neq 0$, then 
   the polynomial is non-trivial. In order to see this note that  the
   constant coefficient of the polynomial is proportional to
$$
j_xn_xk_x(n_xk_x+j_xk_x-j_xn_x)=j_xn_xk_x(k_x^2+n_x^2-k_xn_x)>0
\quad\mbox{if } j_x,n_x,k_x\neq 0\ ,
$$
where the equality holds in view of conditions \eqref{N1}.
    Thus, for any fixed pair of integer vectors  $\bj$,
  $\bn$  there exists at most two values of $L$ for which the resonance
  relation is satisfied. So the set of resonant $L$'s is at most countable. 
\end{proof}

Similar results often hold also when the parameters  $K$ and
$L$ are not independent. A case of particular
relevance for the  meteorology is when $K=cL^2$: 
\begin{lemma}
If $K=cL^2$, then the  set   $\{L\in \R^+ : (L,K)\in\cZ\} $ 
 is at most countable.
\end{lemma}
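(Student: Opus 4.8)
The plan is to follow verbatim the scheme of the preceding lemma, the only genuinely new point being a non‑degeneracy check for the polynomial that replaces the quadratic polynomial in $L^2$ appearing there. Setting $K=cL^2$ (recall $K\ge0$, so $c\ge0$) and writing $t=L^2$, the Rossby frequency becomes
$$
\la^{L,cL^2}_\bk=-\frac{k_xL}{cL^4+k_y^2L^2+k_x^2}=-\frac{k_xL}{D_\bk(t)},\qquad D_\bk(t):=ct^2+k_y^2t+k_x^2 ,
$$
and for $\bk\ne0$ with $k_x\ne0$ one has $D_\bk(t)\ge k_x^2>0$ on $t\ge0$, so no denominator vanishes in the range of interest. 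I would then fix a pair $\bj,\bn\in\Z^2_*$ with $\bk:=\bj+\bn\in\Z^2_*$ and $j_x,n_x,k_x$ all nonzero — the only pairs that can witness membership of $(L,cL^2)$ in $\cZ$ — and clear denominators in \eqref{eq:resonance}. Dividing by $-L$ and multiplying by $D_\bj D_\bn D_\bk$, the resonance relation becomes the polynomial equation $P_{\bj\bn}(t)=0$, where
$$
P_{\bj\bn}(t):=j_x\,D_\bn(t)D_\bk(t)+n_x\,D_\bj(t)D_\bk(t)-k_x\,D_\bj(t)D_\bn(t)
$$
has degree at most $4$ in $t$.

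The crux of the argument — and the only place where something could go wrong — is to show that $P_{\bj\bn}$ is not the zero polynomial. One cannot read this off the top‑degree behaviour: the coefficient of $t^4$ is $c^2(j_x+n_x-k_x)=0$ by \eqref{N1}, and the lower coefficients may cancel as well. The point is that the \emph{constant} term is insensitive to $c$ and to the $y$‑components, since only the constant terms $j_x^2,n_x^2,k_x^2$ of the factors contribute:
$$
P_{\bj\bn}(0)=j_xn_x^2k_x^2+n_xj_x^2k_x^2-k_xj_x^2n_x^2=j_xn_xk_x\,(k_x^2-k_xn_x+n_x^2),
$$
where the last equality uses $j_x+n_x=k_x$. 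This is precisely the expression appearing in the proof of the preceding lemma; since $k_x^2-k_xn_x+n_x^2=(k_x-\tfrac12 n_x)^2+\tfrac34 n_x^2>0$ and $j_x,n_x,k_x\ne0$, it is nonzero. Hence $P_{\bj\bn}(0)\ne0$, so $P_{\bj\bn}\not\equiv0$.

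It then follows that $P_{\bj\bn}$ has at most $4$ roots, hence at most $4$ values $L=\sqrt{t}>0$ solve \eqref{eq:resonance} for this fixed pair $(\bj,\bn)$. Since there are only countably many admissible pairs $(\bj,\bn)$, the set $\{L\in\R^+:(L,cL^2)\in\cZ\}$ is contained in a countable union of finite sets, and is therefore at most countable. I expect the main obstacle to be essentially bookkeeping: the cancellation of the leading coefficients forces the non‑degeneracy to be extracted from the constant term rather than from the leading one, but once this is noticed the computation reduces to the one already carried out in the previous lemma, so there is no substantial difficulty.
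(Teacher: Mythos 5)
Your proof is correct and follows exactly the route the paper intends: the paper gives no proof for this lemma, but the preceding lemma's argument (non-vanishing of the constant coefficient $j_xn_xk_x(k_x^2-k_xn_x+n_x^2)$ after clearing denominators) carries over verbatim, which is what you verify. The observation that the leading $t^4$ coefficient vanishes by $j_x+n_x=k_x$ and that the non-degeneracy must therefore come from the constant term is exactly the right point to check, and your computation of $P_{\bj\bn}(0)$ matches the expression in the paper's proof of the first lemma.
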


When   $(K,L)\notin\cZ$,  
 eq. \eqref{eq:resonance} still has solutions, 
where some of the three frequencies vanish. They are called  \emph{weak resonances}. 
The weak resonances form three groups, depending on which of  the frequencies equals to zero:

\begin{enumerate}
\item[i)]  
$j_x=-n_x\ ,\, j_y=n_y\ ,$
\item[ii)] $j_x=0\ ,\, j_y=-2n_y\ ,$
\item[iii)] $n_x=0\ ,\, n_y=-2j_y\ .$
\end{enumerate}

\subsection{Resonant averaging  }\label{s3.1}
 For a vector $v\in\R^N$, where $N\le\infty$,
we will denote
$\ 
|v|_1=\sum |v_j| \le\infty.
$ 
Given a vector ${W}\in\R^n$ , $1\le n <\infty$, and a positive
integer $m$, we  call the set 
\begin{equation}\label{4.2}
\cA=\cA(W,m):=\{ s\in \Z^n: |s|_1\le m, \,{W}\cdot s=0\}\ 
\end{equation}
the {\it set of resonances for $W$ of order $m$}. 
We denote 
 by  $\cA^Z$ the $\Z$-module in $\Z^n$, 
  generated by 
$\cA$ (called the {\it  resonance module}),  denote its rank by $r$ 
 and set  $\cA^R=\,$span$\,\cA$ (so dim$\,\cA^R=r$). 
Here and everywhere below the finite-dimensional vectors
are regarded as column-vectors
 and ``span" indicates the linear envelope over real numbers.

The following fundamental lemma provides the space  $\cA^R$ with a very convenient integer basis.
For its  proof see, for example,  \cite{bour}, Section~7:
\begin{lemma}
There exists a system $\zeta^1,\ldots,\zeta^n$ of integer vectors in
$\Z^n$ such that
span$\,\{\zeta^1,\ldots,\zeta^r \}= \cA^R$,  and the
$n\times n$ matrix $R=(\zeta^1 \zeta^2\dots\zeta^n)$
 is unimodular (i.e., $\det R=\pm1$).
\end{lemma}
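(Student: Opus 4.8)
The statement is the standard fact that a finitely generated sublattice of $\Z^n$ can be completed to a basis of $\Z^n$ once one passes to its real span; the plan is to deduce it from the Smith normal form. First I would note that $\cA=\cA(W,m)$ is a finite set, so $\cA^Z$ is a finitely generated free $\Z$-module; let $M$ be the $n\times N$ integer matrix whose columns are a finite generating set of $\cA^Z$ (for instance, the elements of $\cA$ themselves). By construction the $\R$-linear span of the columns of $M$ is exactly $\cA^R$, of dimension $r$.

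Next I would apply the Smith normal form theorem: there exist unimodular integer matrices $U$ (size $n\times n$) and $V$ (size $N\times N$) and positive integers $d_1,\dots,d_r$ such that $UMV=D$, where $D$ is the $n\times N$ matrix with $D_{ii}=d_i$ for $1\le i\le r$ and all other entries zero (the number of nonzero diagonal entries equals the rank of $\cA^Z$, hence equals $r$, which by hypothesis is also $\dim\cA^R$). From $M=U^{-1}DV^{-1}$ and the invertibility of $V^{-1}$ over $\R$ we get that the real column space of $M$ coincides with that of $U^{-1}D$, and the latter is spanned precisely by the first $r$ columns of $U^{-1}$. I would then set $R:=U^{-1}$ and let $\zeta^1,\dots,\zeta^n$ be its columns: $R$ is unimodular (as inverse of a unimodular integer matrix), so $\zeta^1,\dots,\zeta^n$ is a $\Z$-basis of $\Z^n$, and by the previous step $\linspan\{\zeta^1,\dots,\zeta^r\}=\cA^R$, which is the assertion.

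There is no genuine obstacle here: the lemma is a textbook consequence of the structure theory of modules over a principal ideal domain, which is why the paper refers to \cite{bour} instead of reproving it. The only point deserving attention is that one is \emph{not} asked to make $\zeta^1,\dots,\zeta^r$ generate $\cA^Z$ (which is false in general unless $\cA^Z$ is saturated in $\Z^n$), but only to span its real envelope $\cA^R$; this weaker demand is precisely what makes the short argument above work. An equivalent route is to replace $\cA^Z$ by its saturation $\Lambda=\cA^R\cap\Z^n$: then $\Z^n/\Lambda$ is torsion-free, hence free, so the inclusion $\Lambda\hookrightarrow\Z^n$ splits and any $\Z$-basis of $\Lambda$ extends to a $\Z$-basis of $\Z^n$.
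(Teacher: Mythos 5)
Your proof is correct. Note that the paper does not actually prove this lemma: it is stated as a known "fundamental lemma" with a pointer to \cite{bour}, Section~7, so there is no in-text argument to compare yours against. Your Smith-normal-form derivation is a complete and standard proof: writing a generating set of $\cA^Z$ as the columns of an integer matrix $M$, factoring $M=U^{-1}DV^{-1}$ with $U,V$ unimodular and $D$ of rank $r$, and observing that the real column space of $M$ equals that of $U^{-1}D$, which is spanned by the first $r$ columns of the unimodular matrix $R:=U^{-1}$ — every step is sound. You also correctly identify the one subtlety: the lemma asks only that $\zeta^1,\dots,\zeta^r$ span $\cA^R$ over $\R$, not that they generate $\cA^Z$ over $\Z$; the paper itself remarks immediately after the statement that the module generated by $\zeta^1,\dots,\zeta^r$ contains $\cA^Z$ and may be strictly larger (with finite quotient), and your construction is consistent with this, since each column of $M$ equals $U^{-1}\bigl(DV^{-1}e_j\bigr)$ with $DV^{-1}e_j$ an integer vector supported on the first $r$ coordinates. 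Your alternative route via the saturation $\Lambda=\cA^R\cap\Z^n$ and the splitting of $\Z^n/\Lambda$ is equally valid and is closer to the module-theoretic phrasing one finds in the cited reference.
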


The $\Z$-module, generated by $\zeta^1,\dots,\zeta^r$, contains $\cA^Z$ and 
 may be bigger  than $\cA^Z$, but the factor-group of the former by the latter always
  is finite. 
 \smallskip

We will write  vectors $y \in \R^n$ as $y=  \left(\begin{array}{c}\!\!y^I\!\! \\ \!\!y^{II}\!\!\end{array}\right), 
  y^I\in\R^r, y^{II}\in\R^{n-r}$.
Then clearly $\cA^R=
\Big\{R   \left(\begin{array}{c}\!\!y^I\!\! \\ \!\!0\!\!\end{array}\right)
 : y^I\in \R^r \Big\}$. Therefore 
\begin{equation}\label{0.0}
s\in\cA  \Rightarrow \text{ $
 R^{-1}s= \left(\begin{array}{c}\!\!y^I\!\! \\ \!\!0\!\!\end{array}\right)
$ for some $y^I\in\Z^r$  }.
\end{equation}
Since $s\in\cA^R$  implies that $ W\cdot s =0$, then also 
\begin{equation}\label{0.00}
\big\{s\in\Z^n,\;\; |s|_1\le m 
\text{ and $ R^{-1}s= \left(\begin{array}{c}\!\!y^I\!\! \\ \!\!0\!\!\end{array}\right)
$ for some $y^I\in\Z^r \big\}    \Rightarrow s\in\cA$
  }.
\end{equation}

Let us provide $\R^n$ with the standard basis $\{e^1,\dots,e^n\}$, where 
$e^j_i=\delta_{ij}$, $1\le i\le n$, and consider the vectors 
\begin{equation}\label{eta}
\eta^j=(R^T)^{-1} e^j\ , \quad j=r+1,\ldots,n.
\end{equation}
Then the vectors  $\eta^{r+1},\dots,\eta^n$ form a basis of $(\cA^R)^\bot$.
Indeed, these $n-r$ vectors are  linearly independent,  and 
 for each $ j>r$,
  $s\in\cA$ in view 
of \eqref{0.0} we have 
\begin{equation}\label{-11}
\lan s,\eta^j\ran = \lan s,(R^T)^{-1}e^j\ran = \lan R^{-1}s, e^j\ran=0 \,.
\end{equation}

\medskip

 For a continuous 
function $f$ on $\T^n$ we define its {\it resonant average of order $m$
with respect to the vector $W$}  as the function\footnote{
To understand this formula, consider the mapping $\T^d\to \T^d$, $\vp\mapsto \psi=R^T\vp$.
In the $\psi$-variables the function $f(\vp)$ becomes $f^\psi(\psi)=f((R^T)^{-1}\vp)$, and the equation 
$\dot\vp=W$ becomes $\dot\psi=R^T W$. So $\dot\psi_1=\dots=\dot\psi_r=0$, and the averaging of $f^\psi$
should be 
$\int_{\T^{n-r}}f^\psi\left( \psi_1, \dots,\psi_{r+1}+\theta_{r+1},\dots, \psi_n+\theta_n
\right) \dbar\theta,$
which equals \eqref{usred}.
}
\begin{equation}\label{usred}
\langle f \rangle_W(\vp) :=
\int_{\T^{n-r}}f\left( \vp+\sum_{j=r+1}^n \theta_j
\eta^j\right) \dbar\theta\ ,
\end {equation}
where we have set $\dbar \theta_j:= \tfrac{1}{2\pi}d\theta_j$.
The importance of the resonant averaging  is due to the {\it resonant 
version of the Kronecker-Weyl theorem}. In order to state it, for a continuous function $f$ on 
$\T^n$,  $f(\vp)=\sum_s f_s e^{is\cdot \vp}$, we define its 
 degree     as 
$ \sup_{s\in \Z^n:f_s\neq 0}{|s|_1}$ (it is finite or infinite; in the former case $f$ is a trigonometric 
polynomial). 

\begin{lemma}\label{l.aver}
Let $f:\T^n\to \C$ be a continuous function of  degree at most  $m$. Then  
\begin{equation}\label{aver}
\lim_{T\to\infty}\frac1{T}\int_0^T f(\vp+t{W})\,dt = \langle
f\rangle_W (\vp)\ ,
\end {equation}
uniformly in $\vp\in\T^n$. The rate of convergence  in the l.h.s. 
depends on $n,m,   |f|_{C^0} $  and  $W$.
\end{lemma}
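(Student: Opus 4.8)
The plan is to decompose $f$ into its Fourier modes and treat each mode separately, using the fact that $f$ has degree at most $m$ so that only the harmonics $e^{is\cdot\vp}$ with $|s|_1\le m$ appear. For such an $s$, the time average along the flow $\vp\mapsto\vp+tW$ of the single mode is
\begin{equation*}
\frac1T\int_0^T f_s e^{is\cdot(\vp+tW)}\,dt = f_s e^{is\cdot\vp}\cdot\frac1T\int_0^T e^{it(s\cdot W)}\,dt\,,
\end{equation*}
and the scalar integral equals $1$ if $s\cdot W=0$ and is bounded by $\frac{2}{T|s\cdot W|}$ otherwise. So as $T\to\infty$ the modes with $s\cdot W\ne0$ die, and the surviving sum is $\sum_{s:\,s\cdot W=0,\ |s|_1\le m} f_s e^{is\cdot\vp}$. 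The key point is then to identify this surviving sum with the resonant average $\langle f\rangle_W(\vp)$ defined by \eqref{usred}: I must check that the set $\{s\in\Z^n:\ |s|_1\le m,\ s\cdot W=0\}$ coincides with $\{s:\ |s|_1\le m,\ \lan s,\eta^j\ran=0\ \text{for all } j=r+1,\dots,n\}$, i.e. with $\{s:\ |s|_1\le m,\ s\in\cA^R\}$. One inclusion is \eqref{0.00} (membership in $\cA$), and $\cA\subset\{s\cdot W=0\}$ is immediate from the definition \eqref{4.2}; the reverse inclusion follows because $s\cdot W=0$ together with $|s|_1\le m$ places $s$ in $\cA$ by definition, and then $R^{-1}s$ has vanishing lower block by \eqref{0.0}, i.e. $\lan s,\eta^j\ran=0$ for $j>r$ by \eqref{-11}. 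Meanwhile, expanding \eqref{usred} in Fourier series and integrating termwise, the integral $\int_{\T^{n-r}} e^{i\sum_j\theta_j\lan s,\eta^j\ran}\,\dbar\theta$ equals $1$ precisely when $\lan s,\eta^j\ran=0$ for all $j=r+1,\dots,n$ and $0$ otherwise, so $\langle f\rangle_W(\vp)=\sum_{s\in\cA^R,\,|s|_1\le m} f_s e^{is\cdot\vp}$, which matches the surviving sum.

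To get uniformity in $\vp$ and the asserted dependence of the rate on $n,m,|f|_{C^0}$ and $W$, I would combine the per-mode estimates. Write
\begin{equation*}
\Big|\frac1T\int_0^T f(\vp+tW)\,dt-\langle f\rangle_W(\vp)\Big|
\le \sum_{\substack{|s|_1\le m\\ s\cdot W\ne0}} |f_s|\cdot\frac{2}{T|s\cdot W|}\,,
\end{equation*}
the right-hand side being independent of $\vp$. Since the Fourier coefficients satisfy $|f_s|\le|f|_{C^0}$, the sum is bounded by $\frac{C(n,m)\,|f|_{C^0}}{T}\cdot\max_{s}\frac{1}{|s\cdot W|}$, where the max is over the finitely many $s\in\Z^n$ with $|s|_1\le m$ and $s\cdot W\ne0$; this max is a finite quantity $\delta(W,m)^{-1}$ depending only on $W$ and $m$. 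Hence the convergence is uniform in $\vp$ with rate $O(1/T)$, the implied constant depending only on $n$, $m$, $|f|_{C^0}$ and $W$, as claimed. When $f$ is merely continuous of infinite degree the statement is vacuous under the hypothesis (degree at most $m$ forces $f$ to be a trigonometric polynomial), so no approximation argument is needed.

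The main obstacle is not analytic but bookkeeping: carefully matching the two descriptions of the resonant index set — the dynamical one $\{s\cdot W=0\}$ and the geometric/lattice one built from $R$ and the dual vectors $\eta^j$ — so that the limit of the Birkhoff averages is exactly \eqref{usred} and not some coarser or finer average. This is where the unimodularity of $R$ and relations \eqref{0.0}, \eqref{0.00}, \eqref{-11} are used, and it is worth stating explicitly that $|s|_1\le m$ is essential: without a degree bound there can be resonant relations $s\cdot W=0$ with $s\notin\cA^R$ (the module $\cA^Z$ may be a proper finite-index submodule of $\cA^R\cap\Z^n$), and the identification would fail.
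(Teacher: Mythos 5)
Your proof is correct and follows essentially the same route as the paper's: reduce to individual harmonics $e^{is\cdot\vp}$ with $|s|_1\le m$, observe that the time average survives exactly when $s\cdot W=0$ (with the non-resonant modes decaying like $2/(T|s\cdot W|)$), and use \eqref{0.0}, \eqref{0.00} and \eqref{-11} to match the surviving index set with the set on which the $\theta$-average in \eqref{usred} is non-zero. The only cosmetic difference is that the paper organizes this as a two-case check ($s\in\cA$ versus $s\notin\cA$) rather than an identification of index sets, and your explicit uniform rate $O\bigl(T^{-1}\bigr)$ with constant governed by $\min\{|s\cdot W|: |s|_1\le m,\ s\cdot W\ne 0\}$ is exactly the dependence the paper asserts.
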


\noindent
{\it Proof.} Let us denote the l.h.s. in \eqref{aver} as $\{f\}(\vp)$. As $f(\vp)$ is a finite sum
of harmonics $f_se^{is\cdot \vp}$, where the number of the terms is $\le C(n,m)$ and 
$|f_s|\le |f|_{C^0}$ for each $s$, it suffices to  prove the assertions for  $f(\vp)=e^{is\cdot \vp}, |s|\le m$.

i) Let $s\in\cA$. Then $s\cdot W=0$ and
 $\{e^{is\cdot \vp}\}=e^{is\cdot \vp}$ since $e^{is\cdot (\vp+t{W})} \equiv e^{is\cdot \vp}$.
 By \eqref{-11},
the integrand in \eqref{usred} equals
$\ 
e^{is\cdot \vp} \prod_{j=r+1}^n e^{i\theta_j \eta^j\cdot s}= e^{is\cdot \vp} .
$
So in this case 
$\ 
\{e^{is\cdot \vp}\}=e^{is\cdot \vp}=\langle e^{is\cdot \vp}\rangle_W.
$
\smallskip

ii) Now let $s\notin \cA$.  Since $|s|_1\le m$, then $s\cdot W\ne0$. Consider
$\ 
\{e^{is\cdot \vp}\}=\lim
 \frac1{T}
\int_0^Te^{is\cdot(\vp+t{W})}\,dt 
$.
The modulus of the expression under the lim-sign is $\, \le  2( {T|s\cdot
  {W}|})^{-1}$.  Therefore $\{e^{is\cdot \vp}\}=0$ and the rate of convergence to zero
  depends only on $\min\{|s\cdot W|\,:\, s\cdot W\neq 0, |s|_1\le m\}$. By \eqref{0.00}, $(R^T)^{-1}s\cdot e^j\ne0$
  for some $j>r$. So the integrand in \eqref{usred} is a function $Ce^{i\xi\cdot\theta}$,
  where $\xi$ is a non-zero integer vector, and
  $\langle e^{is\cdot \vp} \rangle_W=0$. We see that in this case \eqref{aver} also holds.
\qed\medskip

The proof above also demonstrates that if $f$ is  a  finite trigonometrical polynomial $f(\vp)=\sum f_se^{is\cdot \vp}$
of a degree at most $m$, then 
\begin{equation}\label{yy}
\lan f\ran_W(\vp)=
\sum
f_s\delta_{0,\,s\cdot W}\, e^{is\cdot \vp}=
\sum_{s\in \cA(W,m)} f_s\, e^{is\cdot \vp}.
\end{equation}

\section{Averaging for  equation \eqref{1.1}}
From now on,  the pair of parameters  $(K,L)$ is fixed to some
strongly nonresonant value, i.e. $(K,L)\notin\cZ$, 
 and we will usually discard  $K$ and $L$ from the notation, except for
Section~\ref{sez:explicit}. 

\subsection{Equation \eqref{1.1} in the $v$-variables, interaction
  representation and effective equation}
Let us pass in eq. \eqref{1.1} with $\psi\in \cH^r$, $r>0$, to the
$v$-variables:
\begin{equation}\label{5.1}
\begin{split}
 d v_\bk -i\beta \la_\bk v_\bk dt= & P_\bk(v)dt -\lla_\bk
v_\bk dt+b_\bk
d\bb^\bk(t)\ ,\quad \bk \in \Z^2_*\ ,
\end{split}
\end{equation}
where $ \ v(0)=\cF(\psi_0)=:v_0$ (cf. \eqref{1.100}). 
Here $P$ denotes the quadratic  nonlinearity of eq.~\eqref{1.1}
$\ 
\psi\mapsto \left(K-\Delta\right)^{-1}J(\psi,\Delta \psi)\ ,
$
written in the $v$-variables. By Lemma~\ref{l.P^1},
\be\label{quadr}
\left|P(v)\right|_{h^{s}} \le C_s \left|v\right|^2_{h^{1}} \qquad
\text{if } s<-1\,.
\ee
Note (see \eqref{1.100})
that 
\begin{equation}\label{5.00}
P_\bk(v)= \frac \gi{ L(K+|\bk_L|^2)}\sum_{\bj,\bn\in \Z^2_*}
|\bn_{L}|^2\left(\bj\times \bn \right) v_{\bj}v_{\bn} \delta^{\bj\bn}_\bk\,.
\end{equation}
We define the resonant part $R_\bk(v)$ of $P_\bk(v)$ 
 and its nonresonant part 
$R^{\,nr}_\bk(v)$ as
\begin{equation}\label{ura}
R_\bk(v):= \frac \gi {L(K+|\bk_L|^2)}\sum_{\bj,\bn\in \Z^2_*}
|\bn_{L}|^2\left(\bj \times \bn \right) v_{\bj}v_{\bn} \delta^{\bj\bn}_\bk
\delta(\la^{\bj\bn}_{\bk}) \,,
\end{equation}
and
\begin{equation}\label{nonura}
R^{\,nr}_\bk(v):=  P_\bk(v)- R_\bk(v)= 
 \frac \gi {L(K+|\bk_L|^2)}
 \sum_{\substack{\bj,\bn\in \Z^2_*\\ \la_\bj+\la_\bn-\la_\bk\neq 0}} 
|\bn_{L}|^2\left(\bj \times \bn \right) v_{\bj}v_{\bn} \delta^{\bj\bn}_\bk \,.
\end{equation}

Motivated by the averaging theory of \cite{KM13} (see also
\cite{K10,K12} for the nonresonant case), we consider the
following \emph{effective equation}:
\begin{equation}\label{5.eff}
d v_\bk= R_\bk(v) dt -\lla_\bk v_\bk dt +b_\bk d \bb^k\ , \quad \bk\in
\Z^2_*\,.
\end{equation}
Our goal is to  show that the  effective equation describes the limiting, as
$\beta \to0$,  dynamics  of \eqref{1.1}, written 
in the $\ai$-variables of the interaction
representation \eqref{a}. Indeed, let  $\psi^\beta(t)$ be  a solution
of eq. \eqref{1.1}, satisfying $\psi(0)=\psi_0$. Denote
$v^\beta(t)=\cF(\psi^\beta(t))$ and consider the vector of  $\ai$-variables
$$\ai^\beta(t) = \{\ai^\beta_\bk(t)= e^{i\beta \la_\bk
  t}  v^\beta_\bk(t),\, \bk\in \Z^2_*\}$$ 
  (cf. \eqref{a}). 
Notice that 
\begin{equation}\label{newesti}
|v_\bk^\beta(t)|\equiv |\ai_\bk^\beta(t)|\;\;\; \forall\bk\,,
\quad
|v^\beta(t)|_{h^p}\equiv |  \ai^\beta(t) |_{h^p}\;\; \forall\,p\ .
\end{equation}
From \eqref{5.1} we 
obtain the following system of equations for the  vector $\ai^\beta(t)$:
\begin{equation*} 
 \begin{split}
d\ai^\beta_\bk= \left(
{\mathbf R}_\bk(a^\beta, \beta t)
 -\gamma_\bk a^\beta_\bk\right)\,dt +\,
b_\bk e^{i\beta_\bk \la_\bk t}d\bb^\bk(t), \quad \bk\in \Z^2_*\ , \\
\end{split}
\end{equation*}
where  ${\mathbf R}= ({\mathbf R}_\bk, \bk\in\Z^2_*)$, is the nonlinearity $P$, 
written in the $a$-variables.  By \eqref{quadr} and 
\eqref{newesti}, 
\be\label{RR}
| {\mathbf R} (a,\tau)|_{h^{s}} \le C_s |a|^2_{h^1} \qquad \text{if }
s<-1\ ,
\ee
for each $\tau$.
We see immediately that 
${\mathbf R}_\bk = R_\bk(\ai^\beta) + \cR_\bk(\ai^\beta,\beta t)$, where 
\begin{equation}\label{eq:nonres}
\begin{split}
\cR_\bk(\ai,\beta t)&= e^{i\beta\lambda_\bk t}R_\bk^{\,nr}(v)\mid_{v_\bk:=e^{-i\beta\lambda_\bk t}a_\bk \ \forall\,\bk}\\
&=\frac \gi {L (K+|\bk_L|^2)}
\sum_{\substack{\bj,\bn\in \Z^2_*\\ \la_\bj+\la_\bn-\la_\bk\neq 0}} 
|\bn_{L}|^2\left(\bj
\times \bn \right) \ai_{\bj}\ai_{\bn}
\delta^{\bj\bn}_\bk e^{-i\beta t \left(\lambda_\bj+ \lambda_\bn-
\lambda_\bk\right)}\,.
\end{split}
\end{equation}
 Since the numbers $\lambda_\bk$ are odd in $\bk$, then the collection of processes 
 $\{\bar \bb^\bk(t):=\int e^{i\beta \la_\bk
  t}d\bb^\bk(t),\ \bk\in \Z^2_*\}$  is another set of standard independent 
complex Wiener processes with the reality constraint. So 
 the vector-process $\ai^\beta(t)$ is a weak solution of the system of equations
\begin{equation}\label{5.21a}
 \begin{split}
d\ai^\beta_\bk=&\left(R_\bk(\ai^\beta)+\cR_\bk(\ai^\beta,\beta
t)-\lla_\bk a^\beta_\bk\right)\,d t +\,
b_\bk d \bb^\bk(t)\,, \quad \bk\in \Z^2_*\ .\\
\end{split}
\end{equation}
 We will refer to  \eqref{5.21a} as to the {\it
  $\ai$-equations.} It is crucial  that this system is  identical to
the effective equation \eqref{5.eff}, apart from terms which
oscillate fast  as $\beta\to\infty $.

\subsection{Properties of the effective equation. Limits $L=\rho\to\infty$ and $\varkappa\to0$
}\label{sez:explicit}
To write  the effective equation explicitly 
 we consider separately the variables  $v_\bk$
with $k_x=0$ and $k_x\ne0$.  Since the pair of parameters $(K,L)$ is not strongly 
resonant, then   when  $k_x=0$, the only terms which survive in
$R_\bk(v)$ are those where  $\bj$ and $\bn$ satisfy the  relation i) of
Section~\ref{sez:ris}, while for $k_x\neq 0$ only the terms falling in the 
cases ii)-iii) give contribution. For the case $k_x=0$, the
nonlinearity vanishes  if $k_y$ is odd, while  if it is even, then 
$$
R_\bk(v)= \frac \gi {L(K+|\bk_L|^2)} \sum_{j_x\in \Z} 
\left(\frac{j_x^2}{L^2}
+ \frac{k_y^2}4\right)  j_x k_y \,v_{(j_x,k_y/2)}\,v_{(-j_x,k_y/2)}\ ,\quad k_x=0\ ,
$$
which in turn vanishes because  the odd symmetry in $j_x$. On the other
hand, if $k_x\ne0$, then we have the 
 case ii) or  iii), when  $\bj$ and $\bn$ are completely determined
by $\bk$. So the sum in \eqref{ura}  contains only two terms, and we  get that 
 \begin{equation}\label{explicit}
R_\bk(v)= 
 \left(2\gi L \frac{k_xk_y}{k_x^2+L^2 k_y^2+L^2K}\left(3k_y^2-\frac {k_x^2}{L^2}\right)
v_{\bar\bk} \,v_{(0,2k_y)} \right)\,,
\end{equation}
where we denoted  $ \bar\bk:= (k_x,-k_y)$. Note that this formula applies both for the case $k_x=0$ and
the case $k_x\ne0$. 
 We immediately see that 
\be\label{xxx}
|R_\bk(v)|\le C^* k_xk_y |v_{\bar\bk}||\vo| \le  C^{*'}(\bk,s)
|v|_{h^s} |v|_{h^s}\ ,
\ee
so  $R_\bk$ defines a bounded quadratic function on $h^s$,  for any
$\bk$ and any $s$. In a similar way, for any $s$ one has
\be\label{r1}
\left|R(v)\right|_{h^s}\le C^*\left|
v\right|_{h^1}\left|v\right|_{h^{s+1}}\ .
\ee

More importantly, the 
 expression \eqref{explicit}  shows that the hamiltonian part of the  effective equation,
\begin{equation}\label{explicit_nonstoch}
\dot v_\bk=R_\bk(v)\ , \quad \bk\in \Z^2_*\ ,
\end{equation}
is integrable and decomposes to invariant subsystems of complex dimension 
at most three. Indeed, if    $k_x$ or $k_y$ vanishes, then $R_\bk=0$
and $v_\bk(t)=\const$. 

Now let $k_x, k_y\ne0$. If $3L^2k_y^2=k_x^2$, then again the equation for $v_\bk$ trivialises.\footnote{
This case is non-typical and may be excluded by removing another countable set of parameters $L$. }
Suppose that $3L^2k_y^2\ne k_x^2$ and denote 
$$
A_\bk= 2\gi L \frac{k_xk_y}{k_x^2+L^2 k_y^2+L^2 K}\left(3k_y^2-\frac
{k_x^2}{L^2}\right) \in \R\,.
$$
Then $A_{\bar\bk}\equiv - A_\bk $. Eq.~\eqref{explicit_nonstoch} (with any fixed $\bk$)
 belongs to the following invariant  sub-system of \eqref{explicit_nonstoch}:
 \be \label{syst}
 \begin{split}
 &\dot v_\bk=A_\bk\, v_{(0,2k_y)} v_{\bar\bk},  \\
 &\dot v_{\bar\bk} = -A_\bk\, \bar v_{(0,2k_y)} v_\bk ,\\
 &\dot v_{(0,2k_y)} =0  
 \end{split}
 \ee
(we recall that $\vo=\bar v_{(0, -2k_y)}$ by the reality condition \eqref{real}).  This 
system is explicitly soluble:  if $\vo(0)\ne0$, then 
\be\label{vk}
\begin{split}
&\vo(t)=\,\mathop{\rm Const}\,,\\
&v_\bk(t)=  v_\bk(0) \cos(|A_\bk \vo| t) +v_{\bar\bk}(0) {\,\sgn}
(A_\bk \vo)  \sin(|A_\bk \vo| t)\,,
\end{split}
\ee
 where for a complex number $z$ we denote 
$$
\sgn (z) = z/|z|\; \text{if $z\ne0$, and } \  \sgn (0)=0\,.
$$
The formula for $v_{\bar\bk}(t)$ is obtained from that for $v_\bk(t)$ by  swapping $\bk$ with  $\bar\bk$
and replacing $\vo$ by $\bvo$. All these solutions are periodic. If $\vo=0$, then $(v_\bk, v_{\bar\bk}, \vo)$ is a singular
point for the vector field in \eqref{syst}. 

Passing in \eqref{syst} from the variables $(v_\bk, v_{\bar\bk})$ to $(z_1,z_2)$, where
$$
z_1=v_\bk - i\, \sgn\,(A_k\vo) v_{\bar\bk},\quad
z_2=v_\bk + i\, \sgn\,(A_k\vo) v_{\bar\bk}\,,
$$
we get for $(z_1,z_2)$ equations
$$
\dot z_1 = i |A_\bk \vo | z_1,\quad \dot z_2 = -i |A_\bk \vo | z_2\,.
$$
Therefore the functions $|z_1|^2$, $|z_2|^2$ and $\sgn^2(z_1z_2)$ are integrals of motion for \eqref{syst}.
We have proved 

\begin{lemma}\label{l_integr}
Let $k_x, k_y\ne0$ and $3L^2k_y^2\ne k_x^2$. Then the three-dimensional complex system \eqref{syst} is
an invariant subsystem for \eqref{explicit_nonstoch}. Its singular points form the locus
$\ 
\Game = \{ \vo=0\}\cup \{v_\bk = v_{\bar\bk}=0 \}\,.
$
The system has 5  real integrals of motion 
$$
\vo, |z_1|^2, |z_2|^2, \,\sgn^2 (z_1z_2)\,.
$$
Outside $\Game$ they are smooth and independent, so there 
the system is integrable. 
All  its trajectories outside $\Game$
are periodic and are given by \eqref{vk}. 
\end{lemma}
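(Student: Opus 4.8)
The plan is to prove Lemma~\ref{l_integr} by unpacking the explicit formula \eqref{explicit} for $R_\bk(v)$ obtained just above, and then carrying out a linear change of variables that diagonalizes the (quadratic but essentially linear, once $v_{(0,2k_y)}$ is frozen) vector field \eqref{syst}. First I would observe that all the hard analytic work has already been done: the derivation of \eqref{explicit} shows that when $k_x\ne0$ the resonant nonlinearity $R_\bk(v)$ involves only the three components $v_\bk$, $v_{\bar\bk}$, $\vo$ (here $\bar\bk=(k_x,-k_y)$ and $\vo=v_{(0,2k_y)}$), and similarly for $v_{\bar\bk}$ the relevant components are $v_{\bar\bk}$, $v_\bk$, $\bvo$; while for a wavevector of the form $(0,2k_y)$ one has $k_x=0$, hence $R=0$. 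This already establishes that \eqref{syst} is an invariant subsystem of \eqref{explicit_nonstoch}: the three equations \eqref{syst} close up, and their right-hand sides agree with \eqref{explicit_nonstoch} restricted to these coordinates. One should only double-check the sign bookkeeping in passing from the generic $A_\bk$ to $A_{\bar\bk}=-A_\bk$, which follows directly from the definition of $A_\bk$ since $k_y\mapsto-k_y$ flips the sign while $3k_y^2-k_x^2/L^2$ is even.

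Next I would treat $\vo$ as a fixed complex constant (the third equation in \eqref{syst} gives $\dot\vo=0$) and introduce the variables $z_1,z_2$ exactly as in the excerpt:
\begin{equation*}
z_1 = v_\bk - i\,\sgn(A_\bk\vo)\,v_{\bar\bk},\qquad
z_2 = v_\bk + i\,\sgn(A_\bk\vo)\,v_{\bar\bk}.
\end{equation*}
Plugging \eqref{syst} into $\dot z_1,\dot z_2$ and using $|\sgn(A_\bk\vo)|^2=1$ together with $A_\bk\vo\cdot\sgn(A_\bk\vo)=|A_\bk\vo|$, one gets the decoupled linear equations $\dot z_1 = i|A_\bk\vo|z_1$, $\dot z_2 = -i|A_\bk\vo|z_2$. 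From these it is immediate that $|z_1|^2$, $|z_2|^2$ and $\sgn^2(z_1z_2)$ (the latter because the phases of $z_1$ and $z_2$ rotate at opposite rates, so their sum is constant) are conserved; together with the constancy of $\vo$ (which is itself complex, but counting real integrals one lists it as one complex integral) this gives the five real integrals asserted. Solving the linear system and inverting the change of variables produces \eqref{vk}; this is a routine trigonometric computation that I would only indicate, noting that $\cos$ and $\sin$ of $|A_\bk\vo|t$ appear precisely because of the $\pm i|A_\bk\vo|$ eigenvalues, and that the $\sgn(A_\bk\vo)$ factor in \eqref{vk} comes from re-expressing $v_{\bar\bk}$ in terms of $z_1-z_2$.

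Then I would identify the singular locus: the vector field in \eqref{syst} vanishes exactly when either $\vo=0$ (all three right-hand sides vanish) or $v_\bk=v_{\bar\bk}=0$ (the first two vanish and the third is always zero), i.e.\ on $\Game=\{\vo=0\}\cup\{v_\bk=v_{\bar\bk}=0\}$; conversely if $\vo\ne0$ and $(v_\bk,v_{\bar\bk})\ne(0,0)$ then $A_\bk\vo\ne0$ (using $k_x,k_y\ne0$ and $3L^2k_y^2\ne k_x^2$, which is where this hypothesis is used) so at least one of $\dot v_\bk,\dot v_{\bar\bk}$ is nonzero. Away from $\Game$ one has $\sgn(A_\bk\vo)=A_\bk\vo/|A_\bk\vo|$ smooth, hence $z_1,z_2$ are smooth functions of $(v_\bk,v_{\bar\bk},\vo)$, and the five integrals are smooth; their independence can be checked by computing the rank of their differential (the $\vo$-components, $|z_j|^2$, and $\sgn^2(z_1z_2)$ manifestly have linearly independent gradients in a suitable real chart), and periodicity of every nonsingular orbit is read off from \eqref{vk}, the period being $2\pi/|A_\bk\vo|$. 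The only mild subtlety — and the one place I'd be careful — is the smoothness and independence claim near the set $z_1z_2=0$ (where $\sgn^2(z_1z_2)$ is locally constant but its gradient vanishes): there one should note that on that locus the remaining four integrals $\vo,|z_1|^2,|z_2|^2$ already cut out the orbit, so "independent" is meant in the sense appropriate to an integrable system (the joint level sets are the orbits), and this is what \eqref{vk} confirms directly. I do not expect any real obstacle here; the lemma is essentially a repackaging of \eqref{explicit} plus a $2\times2$ diagonalization.
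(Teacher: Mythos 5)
Your proposal is correct and follows essentially the same route as the paper, whose ``proof'' of this lemma is precisely the preceding computation: the reduction of \eqref{explicit} to the closed three-component system \eqref{syst}, the diagonalizing change of variables $z_1,z_2$, and the explicit solution \eqref{vk}. Your extra remark about the locus $z_1z_2=0$ (where $\sgn^2(z_1z_2)$ degenerates and the remaining integrals already cut out the orbit) is a point the paper glosses over, and your reading of it is the right one.
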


Similar infinite-dimensional hamiltonian systems which split into finite-dimen\-si\-onal subsystems 
 systematically arise as resonant parts of
various infinite-dimen\-si\-onal Hamiltonians. See the book \cite{Kar} where many examples are 
discussed.

It immediately follows from the lemma (and can be easily checked directly) that $|v_\bk|^2 + | v_{\bar\bk}|^2$ and
$|\vo|^2$ are integrals of motion for  \eqref{syst}. So we have

\begin{corollary}\label{c_sobolev}
The quantity $|v|_{h^m}^2$ is an integral of motion for \eqref{explicit_nonstoch}, for any $m$. 
\end{corollary}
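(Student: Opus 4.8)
The plan is to read off the result directly from Lemma~\ref{l_integr}. Recall that the system \eqref{explicit_nonstoch} decomposes into the invariant subsystems \eqref{syst} indexed by the ``primary'' wavevectors $\bk$ with $k_x\ne0$, $k_y>0$; the remaining variables $v_\bk$ with $k_x=0$ or $k_y=0$ are constant in time. Since $|v|_{h^m}^2=\sum_\bk|v_\bk|^2|\bk_L|^{2m}$ splits as a sum over these invariant blocks (the constant variables contributing a constant, each triple $\{v_\bk,v_{\bar\bk},v_{(0,2k_y)}\}$ grouped together), it suffices to check that the contribution of each block is conserved.

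First I would fix such a block with $k_x,k_y\ne0$ and, to begin with, treat the typical case $3L^2k_y^2\ne k_x^2$, so that Lemma~\ref{l_integr} applies. The block's contribution to $|v|_{h^m}^2$ is
\[
|\bk_L|^{2m}\bigl(|v_\bk|^2+|v_{\bar\bk}|^2\bigr)+|(0,2k_y)_L|^{2m}|v_{(0,2k_y)}|^2 ,
\]
using $|\bk_L|=|\bar\bk_L|$. By the lemma $v_{(0,2k_y)}$ is an integral of motion, so the second term is constant; and $|v_\bk|^2+|v_{\bar\bk}|^2$ equals $\tfrac12(|z_1|^2+|z_2|^2)$ in the coordinates $(z_1,z_2)$ of the lemma (a direct computation: $|z_1|^2+|z_2|^2=2|v_\bk|^2+2|v_{\bar\bk}|^2$ since $|\sgn(A_\bk v_{(0,2k_y)})|\in\{0,1\}$ — and in the degenerate value $0$ one has $v_{(0,2k_y)}=0$, for which the subsystem \eqref{syst} is itself constant). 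Since $|z_1|^2$ and $|z_2|^2$ are integrals of motion, so is $|v_\bk|^2+|v_{\bar\bk}|^2$, and hence the whole block contribution. Alternatively, and more cleanly, I would simply differentiate $|v_\bk|^2+|v_{\bar\bk}|^2$ along \eqref{syst}: using $\dot v_\bk=A_\bk v_{(0,2k_y)}v_{\bar\bk}$ and $\dot v_{\bar\bk}=-A_\bk\bar v_{(0,2k_y)}v_\bk$ one gets $\tfrac{d}{dt}(|v_\bk|^2+|v_{\bar\bk}|^2)=2\Re(\bar v_\bk\dot v_\bk+\bar v_{\bar\bk}\dot v_{\bar\bk})=2\Re\bigl(A_\bk v_{(0,2k_y)}\bar v_\bk v_{\bar\bk}-A_\bk\bar v_{(0,2k_y)}v_\bk\bar v_{\bar\bk}\bigr)=0$, since the two terms are complex conjugates of each other and $A_\bk$ is real. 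This is the remark quoted just before the corollary.

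Next I would dispose of the non-generic case $3L^2k_y^2=k_x^2$: there $A_\bk=0$, so by \eqref{explicit} $R_\bk=0$ and also $R_{\bar\bk}=0$ (since $A_{\bar\bk}=-A_\bk=0$), hence $v_\bk$, $v_{\bar\bk}$ and $v_{(0,2k_y)}$ are all constant and the block contribution is trivially conserved. Likewise the variables with $k_x=0$ or $k_y=0$ are constant. Summing the (finitely or infinitely many) conserved block contributions gives $\tfrac{d}{dt}|v(t)|_{h^m}^2=0$; the termwise differentiation is legitimate because, along any solution in $h^{m}$, the series $\sum_\bk|\bk_L|^{2m}\tfrac{d}{dt}|v_\bk|^2$ converges locally uniformly — each term vanishes identically on its block, so in fact the sum is literally zero term by term once the variables are grouped into blocks. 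There is essentially no obstacle here; the only mild point requiring care is the bookkeeping of the decomposition into invariant blocks and the treatment of the degenerate loci $v_{(0,2k_y)}=0$ and $3L^2k_y^2=k_x^2$, both of which I have just handled by observing that the corresponding vector field components vanish.
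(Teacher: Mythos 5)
Your proof is correct and follows the same route as the paper: the paper's own argument is exactly the one-line observation that, by Lemma~\ref{l_integr} (or by direct differentiation of \eqref{syst}), $|v_\bk|^2+|v_{\bar\bk}|^2$ and $|\vo|^2$ are conserved on each invariant block, and since $|\bk_L|=|\bar\bk_L|$ the norm $|v|_{h^m}^2$ is conserved blockwise. Your extra care with the degenerate loci ($k_xk_y=0$, $3L^2k_y^2=k_x^2$, $\vo=0$) is sound but not needed beyond noting that the corresponding vector-field components vanish.
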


Accordingly, the effective equation \eqref{5.eff} also splits into invariant 
 subsystems of complex dimension one (if $k_xk_y=0$ or $3L^2k_y^2=k_x^2$),  or of dimension 
three (otherwise). These systems either are independent, or have {\it catalytic interaction} through the variables
$\vo$ which satisfy the Ornstein--Uhlenbeck equation
$$
\dot v_{(0,2k_y)} = - \gamma_{(0,2k_y)} \vo +b_{(0,2k_y)}  \dot\beta^{(0,2k_y)} \,,
$$
independent from other variables. 

Using Corollary \ref{c_sobolev} and arguing as in Section \ref{s1.1} it is easy to get 
 a-priori estimates for  solution of
the effective equation \eqref{5.eff}:
\begin{equation*}
\begin{split}
\E e^{\eps_p|v(T)|_{h^p}^2}-  &e^{\eps_p|v_0|_{h^p}^2}\le  \E\int_0^T\eps_p
e^{\eps_p|v(t)|_{h^p}^2}  \left( 2\eps_p B_p
\left|v(t)\right|_{h^p}^2 -2C_p
\left|v(t)\right|_{h^p}^2  + B_p\right) dt\\
&\le \E\int_0^t \eps_p e^{\eps_p|v(t)|_{h^p}^2}\left(- 
\left|v(t)\right|_{h^p}^2 +B_p\right) dt\,, \quad \text{if}\; B_p<\infty\,,
\end{split}
\end{equation*}
for any $\eps_p\le C_p/(2B_p)$,  where the constant $C_p$ depends
on $K$ and $L$. This implies, via the Gronwall lemma,  that
\be\label{w10}
\E e^{\eps_p|v(T)|_{h^p}^2} \le C(|v_0|_{h^p}, B_p)\qquad \text{if}\;\; B_p<\infty\,.
\ee

This analysis of equation \eqref{5.eff} holds for $\kappa\in[0,1]$ (not only  for $\kappa\in(0,1]$).
Due to the decoupling of eq.~\eqref{5.eff} to finite-dimensional subsystems and since each  subsystem 
is mixing,  (e.g., see \cite{Ver97}), we have 

\begin{lemma}\label{l_effeq}
If $\kappa\in[0,1]$ and for some $p\ge0$ we have $B_p<\infty$ and $v_0\in h^p$, then a strong solution
of \eqref{5.eff} in the space $h^p$ 
such that $v(0)=v_0$ exists globally in time, is unique and satisfies \eqref{w10}. Moreover, the
equation has a unique stationary measure $\mu^0$. It is supported by the space $h^p$, is mixing
and $\int e^{\eps_p|v|_{h^p}^2} \, \mu^0(dv)\le C_p<\infty$
for a suitable $\kappa$-independent constants 
 $\eps_p$ and $C_p$. 
\end{lemma}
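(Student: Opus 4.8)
The plan is to derive everything from the decoupling of \eqref{5.eff} into low-dimensional invariant subsystems, established through \eqref{explicit} and Lemma~\ref{l_integr}, from the already-proved a priori bound \eqref{w10}, and from the Sobolev-norm conservation of Corollary~\ref{c_sobolev}. First I would settle well-posedness componentwise. When $k_xk_y=0$ or $3L^2k_y^2=k_x^2$ one has $R_\bk=0$ (see \eqref{explicit}), so $v_\bk$ obeys an autonomous Ornstein--Uhlenbeck equation with a unique global solution. For the remaining indices $\bk$, formula \eqref{explicit} shows that $v_\bk$, $v_{\bar\bk}$ and the driver $v_{(0,2k_y)}$ --- the latter being itself an Ornstein--Uhlenbeck process --- form a closed Markov subsystem on $\C^3$, namely \eqref{syst} supplemented with damping and additive noise; its drift is polynomial, hence locally Lipschitz, so the subsystem has a unique local strong solution. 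Applying the Ito formula to $\Phi_\bk:=|v_\bk|^2+|v_{\bar\bk}|^2+|v_{(0,2k_y)}|^2$ and using Corollary~\ref{c_sobolev} with $m=0$, which makes the quadratic part of the drift contribute nothing to the generator, one gets $\tfrac{d}{dt}\E\Phi_\bk(t)\le -c\,\E\Phi_\bk(t)+C$ with $c>0$ (all $\lla_\bk$ are positive), so $\E\Phi_\bk(t)$ stays bounded and the solution is global. Assembling these pieces yields the unique strong solution $v(t)$ of \eqref{5.eff} with $v(0)=v_0$; by \eqref{w10} it satisfies $\E e^{\eps_p|v(t)|^2_{h^p}}\le C(|v_0|_{h^p},B_p)$ for all $t$, so $v(t)\in h^p$ a.s., and inspecting the derivation of \eqref{w10} shows that $\eps_p$ and $C_p$ depend only on $K,L,B_p$ --- raising $\kappa\in[0,1]$ only improves the sign of the drift.

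For the stationary measure, existence follows from \eqref{w10} by the Bogolyubov--Krylov argument, exactly as for the stochastic 2d Navier--Stokes equation (\cite{KS}): the uniform-in-$t$ bound makes the time-averaged laws of $v(t)$ tight on $h^{p'}$ for $p'<p$, and any limit point $\mu^0$ is stationary and, by Fatou, satisfies $\int e^{\eps_p|v|^2_{h^p}}\,\mu^0(dv)\le C_p$, hence is carried by $h^p$. Uniqueness and mixing again use the decoupling. Each autonomous Ornstein--Uhlenbeck component has a unique Gaussian stationary law and mixes exponentially, while for any \emph{finite} family of the three-dimensional subsystems the union of the variables they involve, enlarged by the shared drivers $v_{(0,2k_y)}$, is a finite-dimensional closed Markov diffusion. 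This diffusion is nondegenerate --- by \eqref{nondeg} all noise intensities $b_\bk$ are non-zero, so its diffusion matrix is a nonzero multiple of the identity, whence the process is irreducible and strong Feller --- and the sum of the corresponding $\Phi_\bk$ is a Lyapunov function for it, by the computation above. Classical ergodic theory for such diffusions (e.g.\ \cite{Ver97}) then gives a unique stationary measure and exponential convergence to it, with a finite exponential moment. Consequently every finite-dimensional marginal of the solution of \eqref{5.eff} converges, as $t\to\infty$ and from any $v_0\in h^p$, to the corresponding marginal of a single measure; since a Borel measure on the separable space $h^p$ is determined by its finite-dimensional marginals, $\mu^0$ is the unique stationary measure, it is mixing, it lives on $h^p$, and, combining this with \eqref{w10} and Fatou, $\int e^{\eps_p|v|^2_{h^p}}\,\mu^0(dv)\le C_p$ with $\eps_p,C_p$ independent of $\kappa\in[0,1]$.

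The one point that is not completely routine is the catalytic coupling: a single driver $v_{(0,2k_y)}$ is shared by infinitely many three-dimensional subsystems, so \eqref{5.eff} does not split as a product of independent finite-dimensional diffusions and one cannot just multiply together per-block stationary measures. The remedy --- noticing that any \emph{finite} sub-collection of blocks together with its drivers is still a closed finite-dimensional diffusion, and that joint convergence of all such finite-dimensional marginals already determines the stationary measure on $h^p$ --- is standard but must be spelled out; the remainder is an application of classical ergodicity theory for nondegenerate finite-dimensional diffusions with a dissipative Lyapunov function, which is uniform in $\kappa\in[0,1]$ as far as the stationary moment bounds are concerned, though (as Theorem~1 already warns) not for the mixing rate.
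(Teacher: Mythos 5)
Your proposal is correct and follows essentially the same route as the paper, which disposes of the lemma in one sentence by invoking the decoupling of \eqref{5.eff} into finite-dimensional subsystems, the a priori bound \eqref{w10}, and the mixing of each (nondegenerate, dissipative) subsystem via classical ergodic theory as in \cite{Ver97}; you merely supply the details the paper leaves implicit, in particular the observation that finite collections of blocks together with their shared Ornstein--Uhlenbeck drivers $v_{(0,2k_y)}$ form closed nondegenerate diffusions whose marginals determine $\mu^0$. (One cosmetic slip: the diffusion matrix of such a block is $\diag(b_\bk^2)$, uniformly elliptic but not a multiple of the identity; this does not affect the argument.)
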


As a consequence of straightforward analysis of the formula \eqref{explicit} we have: 
\begin{proposition}\label{p_1}
When $L=\rho\to\infty$ or when $\varkappa\to0$, solutions of the effective equation \eqref{5.eff}  a.s. converge to solutions of the 
limiting system, obtained from \eqref{5.eff} by replacing $L=\rho$ by $\infty$ or $\varkappa$ by 0. Besides, the stationary measure
of \eqref{5.eff}  weakly converges to the unique stationary measure of the corresponding limiting system.
\end{proposition}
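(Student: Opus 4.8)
The plan is to read both limits as one and the same continuous‑dependence statement. Let $\theta$ denote the running parameter, $\theta=1/L=1/\rho$ in the first case and $\theta=\kappa$ in the second, so the limit is $\theta\to0$; write $v^\theta(\cdot)$ for a solution of \eqref{5.eff} at parameter value $\theta$, $P^\theta_t$ for its Markov semigroup, and $\mu_\theta$ for its stationary measure. Reading the coefficients of \eqref{5.eff} off from \eqref{explicit} and from the definitions of $\lla_\bk$ and $b_\bk$, one checks that the block coefficients $A_\bk$, the damping coefficients $\gamma_\bk=\lla_\bk$ and the noise coefficients $b_\bk$ are smooth functions $A_\bk(\theta),\gamma_\bk(\theta),b_\bk(\theta)$ of $\theta$ on a one‑sided neighbourhood of $0$, whose values at $\theta=0$ are exactly the coefficients of the limiting equation in the statement; in particular the limiting equation still splits into the blocks \eqref{syst}, so Corollary~\ref{c_sobolev}, the bound \eqref{w10} and Lemma~\ref{l_effeq} apply to it too. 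The first thing I would establish is the one non‑automatic input: that $\eps_p,C_p$ in \eqref{w10} may be chosen uniformly in $\theta$ on the relevant range, using that $B_p$ does not depend on $\theta$ and inspecting the finitely many blocks \eqref{syst} on which the dissipativity used to derive \eqref{w10} behaves worst.

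\textbf{Convergence of the Cauchy problem.} Realise the solutions $v^\theta$ with $v^\theta(0)=v_0$ on one probability space driven by the common Wiener processes $\{\bb^\bk\}$, so $v^0$ solves the limiting equation. I would use the block decomposition: \eqref{5.eff} is a countable family of the finite‑dimensional systems \eqref{syst}, in which the variables $\vo$ obey \emph{autonomous} Ornstein--Uhlenbeck equations with coefficients converging as $\theta\to0$ (so the $\vo$‑processes converge a.s. in $C([0,T])$ by the variation‑of‑constants formula), while the remaining pair $(v_\bk,v_{\bar\bk})$ of each block solves a finite‑dimensional SDE with coefficients depending smoothly on $\theta$ and on the now‑known process $\vo$, with no explosion uniformly in $\theta$ because $|v_\bk|^2+|v_{\bar\bk}|^2$ is conserved by the hamiltonian part (Corollary~\ref{c_sobolev}). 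The standard continuous dependence of solutions of finite‑dimensional SDEs on parameters then gives $v^\theta_\bk(\cdot)\to v^0_\bk(\cdot)$ a.s. in $C([0,T])$ for each $\bk$. To pass to $h^p$ I would invoke the $\theta$‑uniform form of \eqref{w10} in $h^{p+1}$: writing
\[
\E\,|v^\theta-v^0|_{h^p}^2=\sum_\bk|\bk_L|^{2p}\,\E\,|v^\theta_\bk-v^0_\bk|^2 ,
\]
each summand tends to $0$ (blockwise convergence plus uniform integrability from \eqref{w10}) and is dominated by $2|\bk_L|^{-2}\sup_\theta\E|v^\theta|_{h^{p+1}}^2$, a summable bound, so dominated convergence gives $v^\theta\to v^0$ in $L^2(\Omega;h^p)$; inserting $\sup_{[0,T]}$ via the Burkholder--Davis--Gundy inequality as in the derivation of \eqref{u1} upgrades this to a.s. convergence in $C([0,T],h^p)$ along any sequence $\theta\to0$, which is the first assertion.

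\textbf{Convergence of the stationary measures.} By Lemma~\ref{l_effeq} each $\mu_\theta$ exists, is unique, and satisfies $\int e^{\eps_p|v|_{h^p}^2}\,\mu_\theta(dv)\le C_p$ with $\eps_p,C_p$ uniform in $\theta$, so $\{\mu_\theta\}$ is tight on $h^{p'}$ for $p'<p$. Let $\mu^*$ be a weak limit of $\mu_{\theta_n}$ along some $\theta_n\to0$. For $f\in C_b(h^{p'})$ and $t>0$, stationarity gives $\lan f,\mu_{\theta_n}\ran=\lan P^{\theta_n}_t f,\mu_{\theta_n}\ran$; by the convergence of the Cauchy problem — which also yields $P^\theta_t f\to P^0_t f$ boundedly and locally uniformly on $h^{p'}$, the contribution of the complement of a large ball being uniformly small by the exponential moments — one passes to the limit to obtain $\lan f,\mu^*\ran=\lan P^0_t f,\mu^*\ran$. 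Hence $\mu^*$ is stationary for the limiting equation, so $\mu^*=\mu_0$ by the uniqueness in Lemma~\ref{l_effeq}; since every subsequential limit is $\mu_0$ and the family is tight, $\mu_\theta\strela\mu_0$ as $\theta\to0$.

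\textbf{Main obstacle.} The Ito/Gronwall comparison and the tightness‑plus‑uniqueness scheme are routine; the delicate point is precisely the uniformity in $\theta$ of \eqref{w10} and of the mixing rate, which underlies both the passage to the limit in the Cauchy problem and the tightness of $\{\mu_\theta\}$. For $\kappa\to0$ this is essentially immediate, since $\gamma_\bk(\kappa)\ge|\bk_L|^2/(K+|\bk_L|^2)$ is bounded below uniformly in $\kappa$ by a positive ($L$‑dependent) constant. For $L=\rho\to\infty$ one has to go through the blocks \eqref{syst} individually and verify that the conservation laws of Corollary~\ref{c_sobolev}, together with the convergence of the coefficients, still produce \eqref{w10} with $\theta$‑independent constants and keep each block mixing — this is the step I expect to require real care.
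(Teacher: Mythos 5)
Your overall scheme --- block-by-block continuous dependence on the parameter for the finite-dimensional subsystems \eqref{syst}, followed by tightness plus uniqueness to identify the limit of the stationary measures --- is exactly the ``straightforward analysis of the formula \eqref{explicit}'' that the paper invokes; the paper supplies no further argument for this Proposition. For the limit $\kappa\to0$ your proof is essentially complete: Lemma~\ref{l_effeq} already provides the $\kappa$-uniform constants you need, since $\gamma_\bk\ge|\bk_L|^2/(K+|\bk_L|^2)$ uniformly in $\kappa\in[0,1]$ and all coefficients of \eqref{5.eff} converge blockwise.

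The gap is the step you explicitly defer, and for $L=\rho\to\infty$ it is not merely a matter of care: the uniformity of \eqref{w10} and the mixing of the limiting blocks genuinely fail on the modes $\bk=(k_x,0)$. For these modes $R_\bk\equiv0$ and $|\bk_L|^2=k_x^2/L^2\to0$, so for $K>0$ one has $\gamma_\bk=(\kappa|\bk_L|^4+|\bk_L|^2)/(K+|\bk_L|^2)\to0$ while $b_\bk\to d_\bk/K\ne0$: the limiting one-dimensional subsystem is an undamped Brownian motion, which is not mixing and has no stationary measure, and \eqref{w10} cannot hold with $L$-independent constants on these components (for $K=0$ the damping survives but $b_\bk=d_\bk L^2/k_x^2\to\infty$, which is no better). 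Hence your tightness-plus-uniqueness argument for the second assertion breaks down unless you first isolate these modes --- e.g.\ by noting that they are fully decoupled from all blocks \eqref{syst} (the catalytic variables are $\vo$, which have vanishing first component) and are annihilated by the degenerate limiting seminorm $\sum|v_\bk|^2|k_y|^{2p}$, so that the claim need only be proved, and only makes sense, for the $k_y\ne0$ part of the system. A smaller point: upgrading blockwise a.s.\ convergence plus $L^2$-control of the tails to a.s.\ convergence in $C([0,T],h^p)$ requires an almost-sure, $\theta$-uniform bound on $\sup_t|v^\theta(t)|_{h^{p+1}}$, which does not follow from the moment bounds you cite; as written you obtain convergence in probability in $C([0,T],h^p)$, which is in any case all that the (loosely stated) Proposition demands.
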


If $\rho\gg1$, then by Lemma~\ref{l_integr} the subsystems, forming the effective equation, become fast-slow systems with one
fast variable. So the limit $\rho\to\infty$ for  the system  \eqref{5.eff}  can be described, using the 
classical stochastic averaging, see
\cite{FW84}.

\subsection{Averaging theorem for  the initial-value problem.
}\label{s5.2}
For any $p\in \R$ denote
$$
X^p= C([0,T],h^p)\ .
$$
\begin{lemma}\label{l_ascoli}
Let $p_1\le p\le p_2$ and $Q\subset X^p$ such that:

i)
 $Q$ is bounded in $X^{p_2}$,
 
 ii)
 $Q$ is uniformly continuous in $h^{p_1}$

\noindent 
Then $Q$ is pre-compact in $X^p$.
\end{lemma}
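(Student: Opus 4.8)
The plan is to apply the Arzel\`a--Ascoli theorem in the weaker space $h^{p_1}$ and then to promote the resulting convergence to $X^p$ by interpolating between the norms $|\cdot|_{h^{p_1}}$ and $|\cdot|_{h^{p_2}}$. If $p_1=p_2$ then $p=p_1=p_2$, and hypotheses i)--ii) say precisely that $Q$ is bounded and equicontinuous in $h^p$, so $Q$ is pre-compact in $C([0,T],h^p)=X^p$ by Arzel\`a--Ascoli; hence we may assume $p_1<p_2$ and fix $\theta\in[0,1]$ with $p=(1-\theta)p_1+\theta p_2$.

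First I would record two elementary facts about the scale $\{h^m\}$. (a) The embedding $h^{p_2}\hookrightarrow h^{p_1}$ is \emph{compact}: the numbers $|\bk_L|$, $\bk\in\Z^2_*$, form a discrete subset of $\R^+$ with finite multiplicities tending to $+\infty$, and $|\bk_L|^{2p_1}=|\bk_L|^{2p_2}\,|\bk_L|^{-2(p_2-p_1)}$ with $|\bk_L|^{-2(p_2-p_1)}\to0$; so any ball $\{|v|_{h^{p_2}}\le M\}$ has uniformly small tails $\sum_{|\bk_L|>N}|v_\bk|^2|\bk_L|^{2p_1}\le N^{-2(p_2-p_1)}M^2$ while its finite-dimensional truncations are bounded, whence the ball is totally bounded in $h^{p_1}$. (b) For $w\in h^{p_2}$, H\"older's inequality with exponents $\tfrac1{1-\theta},\tfrac1\theta$ applied to the defining series gives the interpolation bound
\[
|w|_{h^p}^2=\sum_{\bk\in\Z^2_*}\big(|w_\bk|^2|\bk_L|^{2p_1}\big)^{1-\theta}\big(|w_\bk|^2|\bk_L|^{2p_2}\big)^{\theta}\le|w|_{h^{p_1}}^{2(1-\theta)}\,|w|_{h^{p_2}}^{2\theta}.
\]

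With these in hand the argument goes as follows. Set $M:=\sup_{u\in Q}\|u\|_{X^{p_2}}<\infty$ by i), and let $(u_n)\subset Q$ be any sequence. For each $t\in[0,T]$ the set $\{u_n(t):n\ge1\}$ is bounded in $h^{p_2}$, hence relatively compact in $h^{p_1}$ by (a); together with the uniform equicontinuity in $h^{p_1}$ from ii), the Arzel\`a--Ascoli theorem for continuous maps $[0,T]\to h^{p_1}$ (a complete metric space) produces a subsequence, still denoted $(u_n)$, which converges in $X^{p_1}=C([0,T],h^{p_1})$; in particular it is Cauchy there. Applying (b) to $w=u_n(t)-u_m(t)$ and taking $\sup_{t\in[0,T]}$,
\[
\|u_n-u_m\|_{X^p}\le\|u_n-u_m\|_{X^{p_1}}^{1-\theta}\,\|u_n-u_m\|_{X^{p_2}}^{\theta}\le(2M)^{\theta}\,\|u_n-u_m\|_{X^{p_1}}^{1-\theta},
\]
which tends to $0$ as $n,m\to\infty$; thus $(u_n)$ is Cauchy, hence convergent, in the Banach space $X^p$. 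Since an arbitrary sequence in $Q$ has an $X^p$-convergent subsequence, $Q$ is pre-compact in $X^p$. The only step that needs genuine care is the compactness of $h^{p_2}\hookrightarrow h^{p_1}$ in (a), which is exactly what permits Arzel\`a--Ascoli to be run with values in $h^{p_1}$; the interpolation inequality and the Cauchy-sequence bookkeeping are routine.
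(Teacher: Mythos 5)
Your main argument --- for $p_1\le p<p_2$ --- is correct and is essentially the paper's proof in different packaging. The paper extracts convergence of each finite-dimensional projection $\Pi_N x_{n(j)}$ by Arzel\`a--Ascoli plus a diagonal argument, and then upgrades to convergence in $X^p$ via the tail estimate $\sum_{|\bk_L|>N}|v_\bk|^2|\bk_L|^{2p}\le N^{2(p-p_2)}|v|_{h^{p_2}}^2$; you instead run Arzel\`a--Ascoli directly in $C([0,T],h^{p_1})$, using the compactness of the embedding $h^{p_2}\hookrightarrow h^{p_1}$ to supply the pointwise relative compactness, and upgrade via the interpolation inequality $|w|_{h^p}\le|w|_{h^{p_1}}^{1-\theta}|w|_{h^{p_2}}^{\theta}$. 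The two upgrades are the same estimate in different clothes (the interpolation bound follows from the same low/high mode splitting), and your version is arguably cleaner since it avoids the diagonal extraction.

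The one genuine defect is your treatment of the endpoints. In the reduction ``if $p_1=p_2$ then \dots pre-compact by Arzel\`a--Ascoli'' you invoke the theorem from boundedness plus equicontinuity alone; for maps into an infinite-dimensional space this is not sufficient --- Arzel\`a--Ascoli for $C([0,T],Y)$ requires $\{u(t):u\in Q\}$ to be \emph{relatively compact} in $Y$ for each $t$, and a bounded subset of $h^p$ need not be. Indeed the family of constant curves $t\mapsto |\bk_L|^{-p}e_\bk$ (normalized coordinate vectors, $|\bk|\to\infty$) is bounded and equicontinuous in $h^p$ but has no subsequence convergent in $X^p$; the same example shows the lemma as literally stated is false whenever $p=p_2$. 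Consistently, when $p=p_2>p_1$ your last display has $\theta=1$ and the bound $(2M)^{\theta}\|u_n-u_m\|_{X^{p_1}}^{1-\theta}=2M$ does not tend to zero, so the claimed Cauchy property does not follow. None of this affects the case the paper actually uses ($p_1=r<-1$, $p=2-\gamma$, $p_2=2$, so $p_1<p<p_2$), and the paper's own proof tacitly needs $p<p_2$ for its Cauchy step as well; but you should drop the false $p_1=p_2$ reduction and make the restriction $p<p_2$ (equivalently $\theta<1$) explicit in your argument.
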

\begin{proof}
We have to show that every sequence $\{x_1, x_2,\dots\}\subset Q$ contains a subsequence $\{x_{n(j)}, j\ge1\}$
which converges in $X^p$ to some point $x_*$. For any $N\ge1$ consider the projection
$$
\Pi_N:h\to h, \qquad 
(v_\bk, \bk\in\Z^2_*) \mapsto  v^N\,,
$$
where $v^N_\bk =v_\bk$ if $|\bk|\le N$, and $v^N_\bk=0$ otherwise.  Then by the Ascoli-Arzel\`a theorem for each $N$ there is a 
subsequence $\{x_{n_N(j)}, j\ge1\}$, such that
$$
\Pi_N x_{n_N(j)} \to y_N\in \Pi_N h\quad \text{as} \quad j\to\infty\,,
$$
for some $y_N$. 
Applying the diagonal process we get a subsequence $\{x_{n(j)}, j\ge1\}$ with the property that 
\be\label{w1}
\Pi_N x_{n(j)} \to y_N\in \Pi_N h\quad \text{as} \quad j\to\infty\,,
\ee
 for each $N$.  Clearly
\be\label{w2}
\Pi_N y_M = y_N\qquad \text{if}\quad M\ge N\,,
\ee
and by i) 
\be\label{w3}
\|y_N\|_{X^{p_2}} \le C\quad \forall\, N, 
\ee
for a suitable $C$. By \eqref{w2}, 
\eqref{w3} the sequence $\{y_N, N\ge1\}\subset X^p$ is a Cauchy sequence. So
$ y_N\to x_*\in X^p$. This convergence jointly with \eqref{w1} and \eqref{w2} imply that 
$x_{n(j)} \to x_*$.
\end{proof}

Let $a^\beta(t)$ be a solution of \eqref{5.21a} such that
$a^\beta(0)=v_0=\cF(\psi_0)\in h^2$. Denote the white noise in \eqref{5.21a} as $\dot
\zeta(t,x)$ and denote $U_1(t)= {\mathbf R}(a^\beta,\beta t)$, $U_2(t)=-\gamma_k a_k$.
Then 
\be\label{z0}
\dot \ai^\beta-\dot \zeta= U_1+U_2\,.
\ee
Fix some   $r<-1$. In view of \eqref{RR}, 
$\|U_1(s)\|_{r} \le C\|\psi(s)\|_{1}^2$. So, by \eqref{u1}, 
$$
\E\int_t^{(t+\tau)\wedge     T}\left\|U_1\right\|_{r}\,ds \le C^*_1(T+1) \tau\ ,
$$
for any $t\in[0,T]$ and $\tau>0$. 
Similar, 
$$
\E \int_t^{(t+\tau)\wedge     T } \left\|U_2\right\|_r\,ds \le C^*\E\int_t^{(t+\tau)\wedge     T}
\|\psi \|_2\le C^*_2(T+1) \tau\ . 
$$
If $U_1$  and $U_2$ are such that 
$$
\int_t^{(t+\tau)\wedge     T } \|U_1\|_r\,ds\le \tau K_1,  \qquad 
\int_t^{(t+\tau)\wedge     T } \|U_2\|_r\,ds\le \tau K_2
$$
for all $t$ and $\tau$ as above,  then the curve
$t\mapsto\int_0^t(U_1+U_2)\,ds\in h^r$ has a modulus of continuity
which depends only on $K_1, K_2$.

It is classical that 
$$
\PP\{ \|\zeta\|_{C^{1/3}([0,T],h^0)}\le R_3\}\to1\quad\text{as}\quad R_3\to\infty\,,
$$
and that the functions $\zeta$ such that 
$\|\zeta\|_{C^{1/3}([0,T],h^0)}\le R_3$ have a modulus of continuity in $h^0$ which depends only on $R_3$.

In view of \eqref{z0} and 
what was said above, for any $\eps>0$ there is a set $Q^1_\eps\subset X^r$, formed by equicontinuous functions, 
such that 
$$
\PP\{a^\beta\in Q^1_\eps\} \ge 1-\eps\,,
$$
for each $\beta$.
By \eqref{u1},
$$
\PP\{\|a^\beta\|_{X^2} \ge C^{*'}\eps^{-1}\} \le \eps\,,
$$
for each $\beta$, for a suitable $C^{*'}$. 
Consider the set
$$
Q_\eps =\left\{ a\in Q^1_\eps: \left\|a\right\|_{X^2} \le
C^{*'}\eps^{-1}\right\}\ .
$$
Then $\PP\{a^\beta\in Q_\eps\} \ge 1-2\eps$, for each $\beta$. 
By this relation and Lemma~\ref{l_ascoli} the set of laws $\{\cD(\ai^\beta), \ 1\le\beta<\infty\}$, 
is tight in $X^{2-\gamma}$, for any positive $\gamma$. So  by the Prokhorov theorem 
 there is a sequence $\beta_l\to\infty$ 
and a Borel measure $\cQ^0$ on $X^{2-\gamma}$ such that
\be\label{z5}
\cD(\ai^{\beta_l}(\cdot)) \strela \cQ^0\quad\text{as}\quad \beta_l\to\infty\,,
\ee
weakly in $X^{2-\gamma}$. Accordingly, due to \eqref{newesti}, for the
actions $I$ (see  \eqref{actions}) we have
\be\label{z55}
\cD\left( I\left( v^{\beta_l}(\cdot)\right)\right) \strela I\circ \cQ^0\quad\text{as}\quad \beta_l\to\infty\,,
\ee
weakly in $C([0,T];h_I^{2-\gamma})=: X_I^{2-\gamma}
$.

Relation \eqref{u1} and the Fatou lemma imply that 
\be\label{z6}
\int \|\ai\|^p_{X^2}\,\cQ^0(da)\le C^*_p
\quad \forall\, p>0\,,
\ee
cf. Lemma 1.2.17 in \cite{KS}. 

\begin{theorem}\label{t5.22a}  Let $v^\beta(t)$ be a solution of \eqref{5.1} such that $v^\beta(0)=v_0\in h^2$,
and let $\gamma$ be any positive number. Then there exists a unique weak solution $\ai(t)$ of the effective
  equation \eqref{5.eff}, satisfying  the $\kappa$-independent 
   estimates \eqref{z6},  such that $\cD(\ai)= \cQ^0$, 
 $\ai(0)=   v_0$  a.s.,  and the convergence
  \eqref{z55} holds as $\beta\to\infty$.  Moreover, this convergence holds uniformly in $\kappa\in(0,1]$.
  That is,
  \be\label{unif}
  \dist (\cD (I(v^\beta(\cdot)), I\circ \cQ^0) \to 0 \quad \text{as} \quad \beta\to\infty\,,
  \ee
  uniformly in $\kappa\in(0,1]$, where $\dist$ is the Lipschitz-dual distance in the space of Borel measures 
  in $X_I^{2-\gamma}$.\footnote{This distance metrizes  the weak convergence of measures in $X_I^{2-\gamma}$,
  see \cite{KS}.}
  \end{theorem}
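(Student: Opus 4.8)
The plan is to identify the limit measure $\cQ^0$ from \eqref{z5} with the law of a weak solution of the effective equation \eqref{5.eff}, and then to upgrade the convergence along the subsequence $\beta_l$ to convergence of the whole family, uniformly in $\kappa\in(0,1]$. First I would show that $\cQ^0$ solves the martingale problem associated with \eqref{5.eff}. Starting from the $\ai$-equations \eqref{5.21a}, I would fix a finite collection of indices $\bk$ and a test function, write the corresponding Ito process, and pass to the limit $\beta_l\to\infty$. The drift terms $R_\bk(\ai^\beta)$ and $-\gamma_\bk a^\beta_\bk$ converge by \eqref{z5} together with the continuity of $R_\bk$ on bounded sets of $h^{2-\gamma}$ (see \eqref{xxx}) and the uniform moment bounds \eqref{z6}, \eqref{u1}; the diffusion part passes to the limit because the $\bb^\bk$ are unchanged. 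The genuinely new point is that the fast-oscillating remainder $\int_t^{t+\sigma}\cR_\bk(\ai^{\beta_l},\beta_l s)\,ds$ tends to zero. This is where the resonant-averaging machinery of Section~\ref{s3.1}, in particular Lemma~\ref{l.aver} and formula \eqref{yy}, enters: after conditioning on the slow variables and using the fact that $\cR_\bk$ collects precisely the harmonics $e^{-i\beta t(\la_\bj+\la_\bn-\la_\bk)}$ with $\la_\bj+\la_\bn-\la_\bk\neq0$, a Khasminskii-type argument (freeze $\ai$ on a short subinterval, estimate the oscillatory integral, sum the errors) gives that the contribution of $\cR$ is $O(\beta_l^{-1})$ in a suitable weak sense. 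I expect the main obstacle to be precisely this averaging step: one has to handle infinitely many Fourier modes simultaneously, so the quantitative rate in Lemma~\ref{l.aver} must be combined with a tail estimate coming from \eqref{RR} and the exponential moment bounds, truncating to $|\bk_L|\le N$ and letting $N\to\infty$ at the end.

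Once $\cQ^0$ is shown to be a law of a weak solution of \eqref{5.eff} with $\ai(0)=v_0$ and satisfying \eqref{z6}, uniqueness of the weak solution of the effective equation (which follows from Lemma~\ref{l_effeq}: the equation decouples into finite-dimensional subsystems, each well posed, with the catalytic variables $\vo$ solving autonomous Ornstein--Uhlenbeck equations) pins down $\cQ^0$ uniquely, independently of the chosen subsequence $\beta_l$. A standard subsequence argument then promotes \eqref{z5}, hence \eqref{z55}, to convergence of the whole family as $\beta\to\infty$. Passing from convergence of $\cD(\ai^\beta)$ in $X^{2-\gamma}$ to convergence of $\cD(I(v^\beta))$ in $X_I^{2-\gamma}$ uses \eqref{newesti} and the continuity of the action map $a\mapsto I(a)$ from bounded subsets of $h^{2-\gamma}$ into $h_I^{2-\gamma}$.

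It remains to check the uniformity in $\kappa\in(0,1]$, which is the reason the argument is phrased in terms of the Lipschitz-dual distance $\dist$ in \eqref{unif}. The key observation is that every a priori estimate used above --- the exponential bound \eqref{1.3}, the polynomial bounds \eqref{u1}, and consequently the tightness of $\{\cD(\ai^\beta)\}$ and the bound \eqref{z6} --- holds with constants $C^*, C^*_p$ that do not depend on $\kappa$ (this was noted when deriving \eqref{1.3}--\eqref{u1}). Likewise the averaging estimate for $\cR_\bk$ involves only the frequencies $\la_\bk$ (which are $\kappa$-independent) and the $\kappa$-uniform moment bounds, so the rate at which $\cR$ washes out is uniform in $\kappa$. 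Thus if \eqref{unif} failed uniformly there would be sequences $\kappa_m\in(0,1]$ and $\beta_m\to\infty$ with $\dist(\cD(I(v^{\beta_m,\kappa_m})),I\circ\cQ^0_{\kappa_m})$ bounded below; passing to a further subsequence along which $\kappa_m\to\kappa_*\in[0,1]$ and repeating the tightness/martingale-problem argument (now Lemma~\ref{l_effeq} is used with $\kappa\in[0,1]$, including $\kappa_*=0$), one identifies the limit with $I\circ\mu$ for the solution of the effective equation at $\kappa=\kappa_*$, and continuity of $\kappa\mapsto\cQ^0_\kappa$ (Proposition~\ref{p_1}) yields a contradiction. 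This compactness-in-$\kappa$ argument is routine once the $\kappa$-uniformity of all the estimates is in place, so the real work is concentrated in the averaging lemma for $\cR$.
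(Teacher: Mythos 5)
Your proposal follows essentially the same route as the paper: tightness plus identification of $\cQ^0$ via the martingale problem, with the main work concentrated in a Khasminskii-type averaging lemma (truncation to $|\bk_L|\le N$, freezing on short subintervals, Lemma~\ref{l.aver} for the oscillatory integrals) showing that the non-resonant term $\cR$ washes out, then uniqueness of solutions to the decoupled effective equation to remove the subsequence, and the same compactness-in-$\kappa$ contradiction argument using the $\kappa$-uniform a priori bounds and the continuity of $\kappa\mapsto\cQ^0_\kappa$. The only cosmetic difference is that the paper passes from pathwise (strong) uniqueness to uniqueness in law via the Yamada--Watanabe argument, a step you leave implicit.
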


\begin{proof} The proof follows the Khasminski scheme (see
  \cite{Khas68}), and is similar to the proof  in
  \cite{KM13}. The main difference compare  to the argument in   \cite{KM13} 
  is in the demonstration   of the following crucial lemma:
  
\begin{lemma}\label{l2.3ihp}
For any $\bk\in \Z^2_*$ the following convergences hold:
\begin{equation}\label{2.14ihp}
\mathfrak A^\beta_\bk:=\E \max_{0\le t\le  T} \left|\int_0^{t }
\cR_\bk(\ai^\beta(s),\beta s) ds  \right| \to 0\quad \mbox{as }
\beta\to \infty\,,
\end{equation}
\be\label{2.14i}
\E \max_{0\le t\le  T} \left|\int_0^{t }
\cR_\bk(\ai^\beta(s),\beta s) ds  \right|^2 \to 0\quad \mbox{as }
\beta\to \infty\ .
\ee
\end{lemma}
The proof of the lemma is given below in Section~\ref{sez:dim}. Now we derive 
  the theorem from the lemma. 

For $t\in[0,T]$ and $\beta\ge0$
 consider the processes 
$$
N^{\beta}_\bk(t)=\ai^{\beta}_\bk(t) - \int_0^t
\left(R_\bk(\ai^{\beta}(s)) -\gamma_\bk \ai_\bk^\beta(s)\right)d s\ ,
\quad \bk\in \Z^2_*\ .
$$
We regard them as processes on the measurable
space  $(X^{2-\gamma},\cF,\PP)$, where $\cF$ is the Borel sigma-algebra, 
given the natural filtration $\{\cF_t, 0\le t\le T\}$,  complemented by negligible sets.

Due to \eqref{5.21a}, where $R_\bk+\cR_\bk= {\mathbf R}_\bk$, 
 we can write $N^{\beta}_\bk$ as
$$
N^{\beta}_\bk(t)= \widetilde N^{\beta}_\bk(t)+\overline
N^{\beta}_\bk(t)\ ,
$$
where the process 
$\widetilde N^{\beta}_\bk(t)=a^{\beta}(t)- \int_0^t (
{\mathbf R}_\bk(\ai^{\beta}(s), \beta s )-\gamma_\bk\ai_\bk^{\beta}(s)) ds $
is a martingale and the process 
$$ 
\overline
N^{\beta}_\bk(t)=\int_0^{t }
\cR_\bk(\ai^{\beta}(s),\beta s)
 ds \ 
$$
should be regarded as a disparity. 
The convergence $\cD(\ai^{\beta})\strela \cQ^0$ weakly in $X^{2-\gamma}$ 
 and Lemma~\ref{l2.3ihp}
imply that the processes
$$
N_\bk(t)= \ai_\bk(t)-\int_0^t \left(R_\bk(\ai(s))-\gamma_\bk
\ai_\bk(s)\right) ds\ , \quad \bk\in \Z^2_*\ ,
$$
are $\cQ^0$ martingales (see  for details \cite{KP08}, Proposition 6.3). Besides, 
$a(0)=v_0$,  $\cQ^0$-a.s.
\smallskip

To proceed, we re-interpret equations \eqref{5.1} and \eqref{5.21a} with the reality constraint \eqref{real}
  as systems of equations for real vectors $v_\bk(t)\in\R^2$,
$a_\bk(t)\in\R^2$, where 
$$
\bk\in \Z^2_+=\{\bk=(k_x, k_y)\in\Z^2_*: k_x>0,\ \text{or}\; k_x=0\;\text{and}\; k_y>0\}
$$
(so that $\Z^2_+\cup -\Z^2_+=\Z^2_*$). 
The diffusion matrix for both systems is the block-matrix
$$
A=\diag\Big \{ \left(\begin{array}{cc} b^2_{\bk}&0\\0&b^2_{\bk}\end{array}\right)\,, \; \bk \in \Z^2_+\Big\}\,. 
$$
We will write real 
two-vectors as $v_\bk= \left(\begin{array}{c}\!\!v^+_\bk\!\! \\ \!\!v^-_\bk \!\!\end{array}\right)$, and accordingly 
write matrix $A$ as $A=(A^{\sigma_1\sigma_2}_{\bk_1 \bk_2})$,  $\sigma_1,\sigma_2\in\{+,-\}$,
where 
$A^{\sigma_1\sigma_2}_{\bk_1 \bk_2}= \delta_{\bk_1 \bk_2}\delta_{\sigma_1 \sigma_2}b^2_{\bk_1}$. 
\smallskip

Since the process 
 $\left(a^\beta_\bk(t) =  \left(\begin{array}{c}\!\!a^{+\beta}_\bk\!\! \\ \!\!a^{-\beta}_\bk \!\!\end{array}\right)\in\R^2,
  \bk\in \Z^2_+\right)$ 
satisfies \eqref{5.21a}, then for $\sigma_j  \in\{+,-\}$ and $\bk_j \in\Z^2_+$ the process
\begin{equation*}
\begin{split}
\big(\sa\saa\big)(t) -\big(\sa\saa\big)(0) &-\int_0^t\Big[\sa(R^{\sigma_2}_{\bk_2} +\cR^{\sigma_2}_{\bk_2} -\gamma_{\bk_2}\saa)\\
&+\saa (R^{\sigma_1}_{\bk_1} +\cR^{\sigma_1}_{\bk_1} - \gamma_{\bk_1}\sa)\Big]\,ds - A^{\sigma_1\sigma_2}_{\bk_1 \bk_2}\, t\,,
\end{split}
\end{equation*}
where $R=R(a^\beta(t))$ and $\cR=\cR(a^\beta(t), \beta t)$, is a martingale. Passing to the limit as $\beta_l\to\infty$,
using \eqref{z5} and \eqref{2.14i}, we get that  the process 
\begin{equation*}
\begin{split}
\big(\va\vaa\big)(t) - \big(\va\vaa\big)(0) - &\int_0^t\Big[ \va\big( R^{\sigma_2}_{\bk_2} - \gamma_{\bk_2} \vaa\big) \\
+& \vaa\big( R^{\sigma_1}_{\bk_1} - \gamma_{\bk_1} \va\big) \Big]\,ds 
-A^{\sigma_1\sigma_2}_{\bk_1 \bk_2}\, t
\end{split}
\end{equation*}
is a $\cQ^0$-martingale.

That is, $\cQ^0$ is a solution of the martingale
problem with drift $R_\bk$ and  diffusion $A$. Hence, $\cQ^0$ is the law
of a weak solution of eq. \eqref{5.eff} with the initial condition $\ai(0)=v_0$. 
Such a solution exists for
any $v_0\in h^2$, so by the uniqueness of a strong solution of the
effective equation
and the Yamada--Watanabe argument (see \cite{Yor74, KaSh,MR99}), weak
and strong solutions for \eqref{5.eff} both exist and are unique. Hence, 
 the limit in \eqref{z5} does not depend on the sequence $\beta_l\to
 \infty$ and  the  convergence holds as $\beta\to \infty$.
   
  It remains to show that the convergence \eqref{unif} is uniform in $\kappa$. Assume that it
  is not. Then there exists $\bar\gamma>0$ and sequences $\beta_l\to\infty$ and  $\{\kappa_l\}\subset(0,1]$
  such that 
  \be\label{e1}
  \dist (\cD(I(v^{\beta_l}_{\kappa_l})), I\circ \cQ^0_{\kappa_l})\ge\bar\gamma\,,
  \ee
  for all $l$.  Without loss of generality we may assume that $\kappa_l\to\kappa_0\in[0,1]$.
  Since the measures $\cQ^0_\kappa$, $\kappa\in(0,1]$, were  obtained as the limit \eqref{z5}, then by \eqref{u1}
  and the Fatou lemma we have 
  $$
  \int \|\psi\|_2^{2p}\cQ^0_\kappa(d\psi)\le C_p^*\,,
  $$
  for each $p$ and each $\kappa$. The block-structure of the effective equation and this 
  estimate immediately imply that $\cQ^0_\kappa$ continuously depends on $\kappa$ in 
  the space of measures on $X^{2-\gamma}$. From this continuity and \eqref{e1} we get that 
  \be\label{e2}
   \dist (\cD(I(v^{\beta_l}_{\kappa_l})), I\circ
   \cQ^0_{\kappa_0})\ge\tfrac12\bar\gamma \quad \forall l\ge \bar \ell\,,
  \ee
  for a suitable $\bar \ell$. 
  
  Now consider the sequence $\{ a^{\beta_l}_{\kappa_l}(t) \}$, where $a^{\beta_l}_{\kappa_l}(t)$ is a solution 
  of \eqref{5.21a} with $\beta=\beta_l$ and $\kappa=\kappa_l$. Literally repeating the first part of the theorem's 
  proof we see that, replacing the sequence $\{l\to\infty\}$ by a suitable subsequence $\{ l'\to\infty \}$,
  we have the weak convergence in $X^{2-\gamma}$
  $$
  \cD(a^{\beta_{l'}}_{\kappa_{l'}}(\cdot) )\strela  \cQ^0_{\kappa_0} \quad\text{as}\quad
  l'\to\infty\,,
  $$
  in contradiction with \eqref{e2}. 
\end{proof}

\subsection{Averaging for stationary solutions.}\label{s.stat}
Our presentation in this section is sketchy since the argument is similar to that
in the previous section, and missing details can be found in \cite{KM13}.

Let $\mu^\beta$ be the stationary measure for eq.~\eqref{5.1}, which
is unique by Theorem~\ref{theorem}, and $\tilde v^\beta(t), 0\le t<\infty$, be a  corresponding stationary 
solution. Let $\bar\mu^\beta=\cD(\tilde v^\beta)\mid_{0\le t <\infty}$. 
Consider the actions 
$I(\tilde v^\beta(t))$ as  in \eqref{actions}.
 Since $\tilde v^\beta$ inherits the
a-priori estimates \eqref{1.3} and \eqref{u1}, then a stationary analogy of the convergence  \eqref{z55}
holds. Namely, for any $\gamma>0$   there
exists a measure $\cQ$ on $C([0,\infty),h^{2-\gamma}_I)=:\bar X_I^{2-\gamma}$ and a sequence $\beta_l\to\infty$ 
such that
\begin{equation}\label{Conv}
\cD(I(\tilde v^{\beta_l}(\cdot))) \strela \cQ\quad \mbox{as }  \beta_l\to
\infty\ ,
\end{equation}
weakly in $\bar X_I^{2-\gamma}$.
The measure $\cQ$   is stationary with respect to translations of
$t$. 

Replacing the sequence $\{\beta_l\}$ by a suitable subsequence we achieve that the 
stationary measures $\mu^{\beta_l} =\bar\mu^{\beta_l}\mid_{t=\const} $ converge, weakly in
$h^{2-\gamma}$, to some measure $m^0$. 
 Clearly, $I\circ m^0$ is the marginal distribution for $\cQ$ as $t=\const$. 
 
 Consider a solution $v^0(t)$ of the effective equation such that $\cD(v^0(0))=m^0$, and 
 compare it with $\ai^\beta(t)$ which is the solution $\tilde v^\beta(t)$, written in the interaction 
 representation. Then $\cD(I(\ai^\beta(\cdot))=\cD(I(\tilde v^\beta(\cdot))$. Since 
 $\cD(\ai^{\beta_l}(0))\strela \cD(v^0(0))$, then for  the same reason 
 as in Section~\ref{s5.2},  $\cD(\ai^{\beta_l}(\cdot))\strela \cD(v^0(\cdot))$, if we replace the sequence
 $\{\beta_l\}$ by a subsequence.  So
 $$
 \cD\big(I(\ai^{\beta_l}(\cdot))\big)=\cD\big(I(\tilde v^{\beta_l}(\cdot))\big) \strela \cD (I(v^0(\cdot)),
 $$
 and $I\circ \cD(v^0(\cdot)) = \cQ$ by \eqref{Conv}. Therefore, $I\circ\cD(v^0(t)) =I\circ m^0$ for any $t$.
Taking the limit as $t\to\infty$ using Lemma~\ref{l_effeq} we get that $I\circ\mu^0 = I\circ m^0$,
where $\mu^0$ is the unique stationary measure for the effective equation. That is,
$$
I\circ \mu^0 = \lim_{\beta_l\to\infty} I\circ \mu^{\beta_l}\,.
$$
Since the stationary measure $\mu^0$ is unique, this convergence holds as $\beta\to\infty$. 
For the same reason as in Section~\ref{s5.2}, it is uniform in $\kappa\in(0,1]$, and we have

 \begin{theorem}\label{t.stat}
 If $\mu^\beta$ is the unique stationary measure for eq.~\eqref{5.1} and  $\mu^0$ is the unique stationary measure for 
 the effective equation  \eqref{5.eff},   then 
 $\lim_{\beta \to\infty} I\circ \mu^{\beta} = I\circ\mu^0$,  and the convergence is uniform in $\kappa\in(0,1]$. For any  solution $v^\beta(t)$ of
\eqref{1.1} with $\beta$-independent initial data $v^\beta(0)\in h^2$,
we have 
$$
\lim_{\beta\to \infty}\lim_{t\to \infty} \cD(I(v^\beta(t)))= I\circ \mu^0\ .
$$
\end{theorem}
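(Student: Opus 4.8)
The plan is to establish Theorem~\ref{t.stat} by combining the averaging theorem for the initial-value problem (Theorem~\ref{t5.22a}) with the ergodic properties of the effective equation (Lemma~\ref{l_effeq}), following the Khasminski-type scheme exactly as in \cite{KM13}. First I would fix $\gamma>0$ and use that the stationary solution $\tilde v^\beta(t)$, $t\ge0$, inherits the a-priori bounds \eqref{1.3}--\eqref{u1} uniformly in $\beta>0$ and $\kappa\in(0,1]$; these give tightness of the laws $\cD(I(\tilde v^{\beta}(\cdot)))$ on $\bar X_I^{2-\gamma}$ via Lemma~\ref{l_ascoli} exactly as in Section~\ref{s5.2}, so along a subsequence $\beta_l\to\infty$ there is a limit measure $\cQ$ on $\bar X_I^{2-\gamma}$ as in \eqref{Conv}, and simultaneously the marginals $\mu^{\beta_l}=\bar\mu^{\beta_l}\mid_{t=\const}$ converge weakly in $h^{2-\gamma}$ to some $m^0$; by stationarity of $\tilde v^\beta$ the limit $\cQ$ is translation-invariant in $t$ and $I\circ m^0$ is its marginal at any fixed time.

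Next I would identify $\cQ$ with the law of the action-process of an \emph{effective} solution. Take $v^0(t)$ solving \eqref{5.eff} with $\cD(v^0(0))=m^0$; since $\cD(a^{\beta_l}(0))=\cD(\tilde v^{\beta_l}(0))=\mu^{\beta_l}\strela m^0=\cD(v^0(0))$, the same martingale-problem argument as in the proof of Theorem~\ref{t5.22a} — using Lemma~\ref{l2.3ihp} to kill the disparity terms $\overline N^\beta_\bk$ and the convergence \eqref{z5} to pass to the limit in the martingale relations for the products $\va\vaa$ — shows, after passing to a further subsequence, that $\cD(a^{\beta_l}(\cdot))\strela \cD(v^0(\cdot))$ weakly in $X^{2-\gamma}$, whence $\cD(I(\tilde v^{\beta_l}(\cdot)))=\cD(I(a^{\beta_l}(\cdot)))\strela \cD(I(v^0(\cdot)))$ and therefore $I\circ\cD(v^0(\cdot))=\cQ$. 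In particular $I\circ\cD(v^0(t))=I\circ m^0$ for \emph{every} $t\ge0$, since $\cQ$ is $t$-stationary.

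Now I would invoke the mixing property of the effective equation from Lemma~\ref{l_effeq}: letting $t\to\infty$, $\cD(v^0(t))$ converges to the unique stationary measure $\mu^0$ of \eqref{5.eff}, so $I\circ\cD(v^0(t))\to I\circ\mu^0$; comparing with the previous line gives $I\circ m^0=I\circ\mu^0$, hence $I\circ\mu^{\beta_l}\to I\circ\mu^0$. Since $\mu^0$ is unique, the limit does not depend on the subsequence, so $\lim_{\beta\to\infty} I\circ\mu^\beta=I\circ\mu^0$. The uniformity in $\kappa\in(0,1]$ is obtained by the same contradiction argument as at the end of the proof of Theorem~\ref{t5.22a}: if it failed there would be $\beta_l\to\infty$ and $\kappa_l\to\kappa_0\in[0,1]$ with $\dist(I\circ\mu^{\beta_l}_{\kappa_l},I\circ\mu^0_{\kappa_l})\ge\bar\gamma$; the block-structure of \eqref{5.eff} together with the $\kappa$-uniform moment bounds \eqref{w10} makes $\mu^0_\kappa$ (hence $\mu^0$, whose existence uses $\kappa\in[0,1]$) continuous in $\kappa$, reducing to $\dist(I\circ\mu^{\beta_l}_{\kappa_l},I\circ\mu^0_{\kappa_0})\ge\tfrac12\bar\gamma$, and then re-running the first two steps with the varying parameters $\kappa_l$ along a sub-subsequence yields $I\circ\mu^{\beta_{l'}}_{\kappa_{l'}}\to I\circ\mu^0_{\kappa_0}$, a contradiction. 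Finally, for the iterated-limit assertion, fix $\beta$-independent initial data $v^\beta(0)\in h^2$; by Theorem~\ref{theorem} each eq.~\eqref{5.1} is mixing, so $\cD(I(v^\beta(t)))\to I\circ\mu^\beta$ as $t\to\infty$ for fixed $\beta$, and then $\lim_{\beta\to\infty}\lim_{t\to\infty}\cD(I(v^\beta(t)))=\lim_{\beta\to\infty}I\circ\mu^\beta=I\circ\mu^0$.

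The main obstacle I anticipate is the same one flagged by the authors for the non-stationary case: verifying that the disparity terms $\overline N^{\beta}_\bk(t)=\int_0^t\cR_\bk(a^\beta(s),\beta s)\,ds$ vanish in the stationary regime, i.e.\ the stationary analogue of Lemma~\ref{l2.3ihp}. One must show the fast oscillations $e^{-i\beta t(\la_\bj+\la_\bn-\la_\bk)}$ (with $\la_\bj+\la_\bn-\la_\bk\ne0$, guaranteed precisely by $(K,L)\notin\cZ$ via the three-waves analysis of Section~\ref{sez:ris}) average the off-resonant nonlinearity to zero as $\beta\to\infty$, uniformly over the stationary ensemble, using only the $\beta$- and $\kappa$-uniform a-priori estimates on $\|\tilde v^\beta\|_2$; everything else in the argument is a routine transcription of Section~\ref{s5.2} from finite horizon $[0,T]$ to $[0,\infty)$ together with the mixing input of Lemma~\ref{l_effeq}.
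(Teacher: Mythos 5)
Your proposal is correct and follows essentially the same route as the paper's Section~\ref{s.stat}: tightness of the stationary laws from the $\beta$- and $\kappa$-uniform estimates \eqref{1.3}--\eqref{u1}, identification of the subsequential limit with an effective solution started from $m^0$ via the martingale problem of Theorem~\ref{t5.22a}, then stationarity of $\cQ$ combined with the mixing of Lemma~\ref{l_effeq} to conclude $I\circ m^0=I\circ\mu^0$, with uniqueness removing the subsequence, the same contradiction argument giving $\kappa$-uniformity, and mixing of eq.~\eqref{5.1} handling the iterated limit. The ``obstacle'' you flag is not really one: Lemma~\ref{l2.3ihp} only uses the a-priori bound \eqref{u1} on a finite time horizon, which the stationary solution inherits, so it applies verbatim.
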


\section{Proof of Lemma~\ref{l2.3ihp}}\label{sez:dim}
We restrict ourselves to demonstrating \eqref{2.14ihp} since the proof
of \eqref{2.14i} is similar. We adopt a notation from \cite{KP08}. Namely, 
we denote by $\vk(t)$ various functions of $t$ such that $\vk\to0$ 
as $t\to\infty$, and denote by $\vk_\infty(t)$ functions, satisfying  
$\vk(t)=o(t^{-n})$ for each $n$. We write $\vk(t;M)$ to indicate that $\vk(t)$ 
depends on a parameter $M$. Besides for events $Q$ and $O$ and  a
random variable  $f$ we write $\PP_O(Q)=\PP(O\cap Q)$ and 
$\E_O(f)=\E(\chi_O\, f)$. 

The constants below may depend on $\bk$, but
this dependence is not indicated since $\bk$ is fixed through the proof of the lemma. 
By $M$ and $N$ we denote  suitable functions of
$\beta$ such that $N\ge|\bk_L|$, 
 $$
 M(\beta), N(\beta)
 \to \infty\quad \text{as}\quad
 \beta\to \infty\,,
 $$
  but
\begin{equation}\label{eq:cond_M}
\beta^{-1} (M^n +N^n)
\to0 \quad \mbox{as } \beta\to \infty\ , \quad \forall n\,.
\end{equation}
We recall that a notation $v^N$, where $N\in\N$ and $v$ is a vector
$(v_\bk, \bk\in\Z^2_*)$, is defined in Notation.

Define
\begin{equation*}
\mathfrak A^\beta_{\bk,N}:= \E \max_{0\le t\le T} \left|\int_0^{t}
\cR_\bk(\ai^{\beta,N}(s),\beta s)  ds
\right|\ .
\end{equation*}
As  $\cR = {\mathbf R}- R$, then by \eqref{RR} and \eqref{r1} the
quadratic function $\cR_\bk$ satisfies 
\be\label{kva}
|\cR_\bk(a,\tau)| \le C|a|^2_{h^1}\qquad \forall \tau\,.
\ee
  Since 
$$
|v-v^N|_{h^1}^2 = \sum_{|\bn_L|>N} |\bn_L|^2 |v_\bn|^2
\le N^{-2}
 |v|_{h^2}^2,
$$
then we have
\begin{equation}\label{5.169}
\begin{split}
\big|\mathfrak A^\beta_{\bk}-\mathfrak A^\beta_{\bk,N} \big|&\le  
\E\int_0^T\left|\cR_\bk(\ai^\beta(s),\beta
s)-\cR_\bk(\ai^{\beta,N}(s),\beta s) \right| ds \\
&\le C\E\int_0^T |\ai^\beta|_{h^1} |\ai^\beta-\ai^{\beta,N}|_{h^1} ds\\
&\le \frac {C}{N}\,\E\int_0^T \left| \ai^{\beta}\right|^2_{h^2}
ds \le \vk(N)\ . 
\end{split}
\end{equation}

Denote by $\Omega_M=\Omega^\beta_M$ the event
$$
\Omega_M=\left\{\sup_{0\le\tau\le T} \left|\ai^\beta(\tau)\right|_{h^2}\le M\right\}\ .
$$
Then, by \eqref{u1}, $\IP(\Omega_M^c)\le \vk_\infty(M)$, and using \eqref{kva} we get 
\begin{equation*}
\begin{split}
\E_{\Omega_M^c}
\mathfrak A^\beta_{\bk,N}&\le \int_0^T
\E_{\Omega_M^c}|\cR_\bk(\ai^{\beta,N}(s),\beta s) | dt\\
&\le C
\left(\IP(\Omega_M^c)\right)^{1/2}  
\int_0^T\left(\E \left|\ai^\beta\right|_{h^1}^4\right)^{1/2} dt \le
\vk_\infty(M)\ . 
\end{split}
\end{equation*}
So  $
\mathfrak A^\beta_{\bk,N}\le \vk_\infty(M)+\mathfrak A^\beta_{\bk,N,M}\,,
$
where
$$
\mathfrak A^\beta_{\bk,N,M}:= \E_{\Omega_M} \max_{0\le t\le T} \left|\int_0^{t}
\cR_\bk(\ai^{\beta,N}(s),\beta s)  ds
\right|\,.
$$ 

Consider a partition of $[0,T]$  by the points
$$
\tau_n= nL,\quad 0\le n\le K\sim T/ L\,, \qquad L=\beta^{-1/2}. 
$$
where $ \tau_{K}$ is the last point $  \tau_n$ in $[0,T)$.   Let us denote
$$
\eta_l= \int_{\tau_l}^{ \tau_{l+1}}  \cR_\bk(\ai^{\beta,N
}(s),\beta s)  ds\ ,\quad 
0\le l\le K-1\,.
$$
Since for $\omega\in\Omega_M$  and any  $\tau'<\tau''$ such that $\tau''-\tau'\le L$ 
in view of \eqref{kva} we have $
\left|\int_{\tau'}^{\tau''} \cR_\bk(\ai^{\beta,N
}(s),\beta s)  ds\right| \le L C(M)$,
then
\begin{equation}\label{5.170}
\mathfrak A^\beta_{\bk,N,M} \le LC(M)+\E_{\Omega_M}\sum_{l=0}^{K-1}|\eta_l|\,.
\end{equation}
 Fix any $r<-1$ and consider the  
event 
$$
\cF_l=\left\{\sup_{\tau_l\le s\le\tau_{l+1}}|\ai^\beta(s)
-\ai^\beta(\tau_l)|_{h^r} \ge L^{1/4}\right\}\ .
$$
By the equicontinuity of the processes $\{a^\beta(t)\}$ on suitable events 
with arbitrarily close to one
 $\beta$-independent probability 
 (as shown in  Section~\ref{s5.2}), the 
probability of  $\IP(\cF_l) $  goes to zero with $L$, uniformly in $l$ and 
$\beta$. Since $|\eta_l|\le C(M) L$ for $\omega\in\Omega_M$ and for each $l$, then 
\begin{equation}\label{5.211}
\sum_{l=0}^{K-1}\left|\E_{\Omega_M} |\eta_l|-
\E_{\Omega_M\backslash \cF_l}|\eta_l|  \right| \le
{C(M)}{L}\sum_{l=0}^{K-1}\IP_{\Omega_M}(\cF_l)\le C(M) \vk(L^{-1})\ , 
\end{equation}
and it remains to estimate $\sum_l \E_{\Omega_M\backslash \cF_l} |\eta_l|$. 

We have
\begin{equation*}
 \begin{split}
\  |\eta_l|   &\le \left|
\int_{\tau_l}^{\tau_{l+1}} \left(  \cR_\bk(\ai^{\beta,N
  }(s),\beta s)- \cR_\bk(\ai^{\beta,N
  }( \tau_l),\beta s)\right) ds\right|\\
&+ \left|
\int_{ \tau_l}^{\tau_{l+1}}\left(  \cR_\bk(\ai^{\beta,N
  }(\tau_l),\beta s)\right) ds\right| 
 =:\Upsilon^1_l+\Upsilon^2_l\ .
\end{split}
\end{equation*}
By \eqref{kva}, in
 $\Omega_M$ the following inequalities hold:
\begin{equation*}
\begin{split}
\left|  \cR_\bk(\ai^{\beta,N
  }(s),\beta s)- \cR_\bk(\ai^{\beta,N
  }( \tau_l),\beta s)\right|&\le C M \left|\ai^{\beta,N
  }(s)- \ai^{\beta,N}( \tau_l)\right|_{h^1}\\
&\le C M N^{1-r} \left|\ai^{\beta,N
  }(s)- \ai^{\beta,N}( \tau_l)\right|_{h^r} \,.
\end{split}
\end{equation*}
So that, by the definition of $\cF_l$, 
\begin{equation}\label{5.002}
  \sum_l  
\E_{\Omega_M\backslash \cF_l}  
 \Upsilon^1_l  \le L^{1/4}C(N,M)=\vk( \beta^{1/8};N,M) \ . 
\end{equation}

It remains to estimate the expectation of $\sum\Upsilon^2_l$. Abbreviating $a^\beta$ to $a$ and writing 
$s\in[\tau_l, \tau_{l+1}]$ as $s=\tau_l+\tau$, $0\le\tau\le L$, we write $\cR_\bk(a^N(\tau_l), \beta s)$ as 
$$
\frac \gi {L (K+|\bk_L|^2)}
\sum_{\substack{ |\bj_L|\le N, \, |\bn_L|\le N \\ \la_\bj+\la_\bn-\la_\bk\neq 0}} 
|\bn_{L}|^2\left(\bj
\times \bn \right) \ai_{\bj} (\tau_l)   \ai_{\bn}(\tau_l)
\delta^{\bj\bn}_\bk 
e^{-i\beta \tau_l \left(\lambda_\bj+ \lambda_\bn- \lambda_\bk\right)}
e^{-i\beta \tau \left(\lambda_\bj+ \lambda_\bn- \lambda_\bk\right)}.
$$
Now consider the torus $\T^{\tilde N} = \{\vp_\bk: |\bk_L|\le N\}$, and the trigonometrical polynomial 
of degree three, defined on $\T^{\tilde N}$:
\begin{equation*}
\begin{split}
f_l(\vp)=\frac \gi {L (K+|\bk_L|^2)}
\sum_{\substack{ |\bj_L|\le N, \, |\bn_L|\le N \\ \la_\bj+\la_\bn-\la_\bk\neq 0}} &\Big(
|\bn_{L}|^2\left(\bj
\times \bn \right) \ai_{\bj} (\tau_l)   \ai_{\bn}(\tau_l)\\
&\times\delta^{\bj\bn}_\bk 
e^{-i\beta \tau_l \left(\lambda_\bj+ \lambda_\bn- \lambda_\bk\right)}
e^{-i(\vp_\bj+\vp_\bn-\vp_\bk)}\Big)\,.
\end{split}
\end{equation*}
Its coefficients are bounded by a constant $C(M,N)$, and 
$\cR_\bk(a^N(\tau_l), \beta s) = f_l(\beta\tau\Lambda^N)$. 

We have
$$
\Upsilon^2_l =  \left|
\int_{ \tau_l}^{\tau_{l+1}}\left(  \cR_\bk(\ai^{\beta,N
  }(\tau_l),\beta s)\right) ds\right|  =
 \left |\int_0^Lf_l(\beta\tau\Lambda^N)\,d\tau\right|=\beta^{-1}
 \left |\int_0^{\beta L}f_l( t\Lambda^N)\,d t\right|\,.
$$
Since in the sum, which defines $f_l$, the indexes $\bj$ and $\bn$ are such that 
$\la_\bj+\la_\bn-\la_\bk\neq 0$, then in view of \eqref{yy}, $\langle f_l\rangle_{\Lambda^N}=0$. 
Accordingly, by Lemma~\ref{l.aver}, 
$$
\Upsilon^2_l\le L \vk (\beta L; N,M,\Lambda)\ .
$$
Therefore
\begin{equation}\label{5.003}
 \sum_l  \E_{ \Omega_M\backslash\cF_l}
\Upsilon^2_l  \le\vk(\beta^{1/2};N ,M, \Lambda ). 
\end{equation}

Now the inequalities 
\eqref{5.169}--\eqref{5.003} imply that 
\begin{equation*}
\begin{split}
\mathfrak A^\beta_\bk \le \,& \vk(N)+\vk_\infty(M)+ \vk(\beta;M)+\vk(\beta;N,M)
+\vk(\beta;N ,M,\Lambda)\ .  
\end{split}
\end{equation*}
Choosing first $N$ and $M$ large, and then
$\beta$ large, in 
such a way that \eqref{eq:cond_M} holds, we make the r.h.s.   arbitrarily
small. This proves the lemma. 
\qed

\bibliography{meas}
\bibliographystyle{amsalpha}

\end{document}